\documentclass[12pt]{article}

\usepackage{amsmath,amssymb,amsfonts,amsthm}
\usepackage{algorithm}
\usepackage{algorithmic}
\usepackage{booktabs}
\usepackage{multirow}
\usepackage{array}
\usepackage{url}
\usepackage{xcolor}
\usepackage{graphicx}
\usepackage{float}
\usepackage{placeins}
\usepackage{geometry}
\usepackage{hyperref}
\usepackage{adjustbox}

\newtheorem{theorem}{Theorem}

\newtheorem{corollary}{Corollary}
\newtheorem{proposition}{Proposition}
\newtheorem{definition}{Definition}
\newtheorem{remark}{Remark}
\newtheorem{example}{Example}

\newcommand{\Zn}{\mathbb Z_n}

\newcommand{\Cay}{\operatorname{Cay}}
\newcommand{\RS}{\mathcal R}

\newcommand{\one}{\mathbf 1}

\title{Fault-Tolerant Shared-Relay Communication in Circulant Interconnection Networks}

\author{Bader Albader, Galal Hassan, and Mohamed R. Al-Mulla\\
\small Department of Computer Science, Faculty of Science, Kuwait University, Kuwait\\
\small \texttt{albader@cs.ku.edu.kw}}

\date{}

\begin{document}

\maketitle

\begin{abstract}
Circulant interconnection networks provide symmetric addressing, compact generator descriptions, and uniform local connectivity. This paper maps a degree--redundancy landscape for a fault-tolerant two-hop primitive in directed circulants: given $n$ nodes and degree budget $m$, how large can the worst-case shared-relay multiplicity $R(n,m)$ be? A node is a shared relay for an ordered terminal pair if it has outgoing links to both terminals; an $f$-relay-fault-tolerant circulant requires at least $f+1$ such relays for every pair. The underlying feasibility condition is a cyclic difference-multiplicity condition, which we use as a mathematical tool rather than claim as a new object. The contribution is the network-design framework around this tool: the parameters $R(n,m)$ and $D_f(n)$, a negative theorem for interval circulants, relay-table preprocessing and lookup algorithms, adversarial and random failure guarantees, load-balance scope, certified upper-bound interpretation of heuristic designs, exact small-$n$ calibration, a software lookup-versus-search microbenchmark, and a reproducible study of 526,539 generator sets. The results show that generator choice critically determines worst-case relay survivability: optimized threshold designs achieve $f$-relay-fault tolerance within about $1.16$--$1.63$ of the counting lower bound, while standard interval generators can fail structurally even at much larger degrees.
\end{abstract}

\noindent\textbf{Keywords:} Circulant networks, Cayley graphs, interconnection networks, fault-tolerant communication, shared relay, two-hop recovery, relay routing, difference bases, network robustness.

\section{Introduction}

Circulant interconnection networks are natural models for parallel and distributed systems because a node can be addressed by one integer modulo $n$ and its outgoing links are obtained by adding a fixed generator set. A directed circulant is written as
\[
G=\Cay(\Zn,S),
\]
where $S\subseteq\Zn$ and
\[
x\to y \quad \Longleftrightarrow \quad y-x\in S.
\]
Thus every node has the same out-degree $|S|$, the same local connection pattern, and the same routing table up to translation.

This paper studies a local fault-tolerance primitive that is different from conventional graph connectivity, diameter, or fault diameter. For two distinct terminal nodes $u,v\in\Zn$, a node $r$ is a \emph{shared relay} for the ordered pair $(u,v)$ if
\[
r\to u \quad\text{and}\quad r\to v.
\]
If every ordered terminal pair has many shared relays, then pairwise relay-assisted communication, acknowledgement aggregation, rendezvous communication, replicated control signaling, or local recovery can survive relay-node failures without running a global rerouting search.

Fig.~\ref{fig:small-circulants} illustrates the idea on two small directed circulants. The same terminal offset can have no shared relay under a poorly distributed generator set, while a more distributed generator set creates multiple immediate relay alternatives for the same pair.

\begin{figure}[H]
\centering
\includegraphics[width=0.75\linewidth]{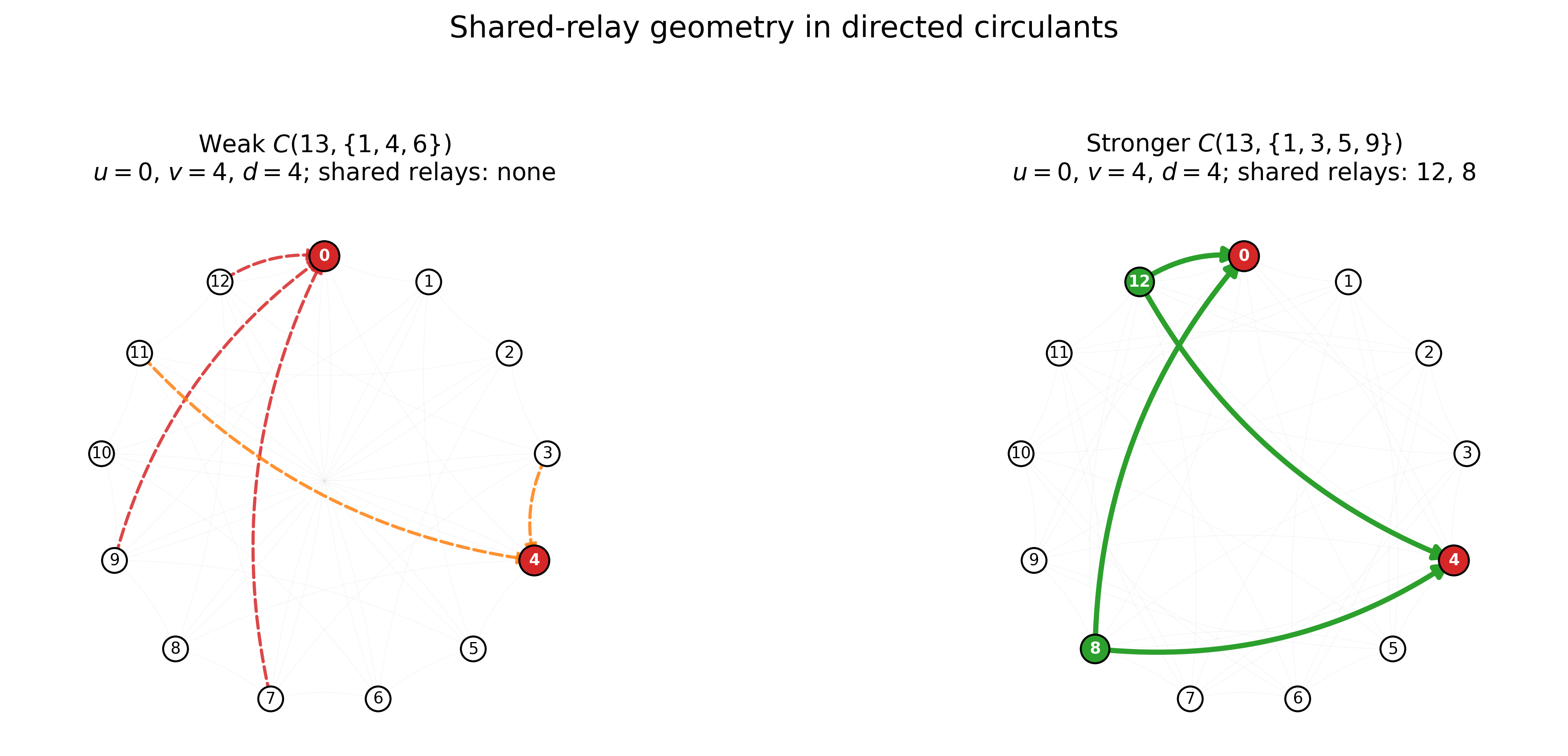}
\caption{Shared relays in two small directed circulants. Nodes are arranged on a ring. Red nodes are the terminals. Green nodes are shared relays and highlighted green arcs show the two outgoing relay links. For the same terminal offset, the weaker generator set has no shared relay. Red dashed arcs indicate one-sided attempts that reach $u$ but not $v$, while orange dashed arcs indicate attempts that reach $v$ but not $u$. The stronger generator set has two shared relays.}
\label{fig:small-circulants}
\end{figure}

The relevant design question is a degree--redundancy question: how much circulant degree is required to guarantee many immediate relay alternatives for every terminal pair? We formalize this through two parameters. The first is $D_f(n)$, the minimum directed circulant degree required so that every ordered terminal pair survives any $f$ relay failures. The second is $R(n,m)$, the maximum worst-case shared-relay multiplicity achievable by a degree-$m$ circulant on $n$ nodes.

The difference-multiplicity equivalence is the bridge, not the destination. The network-design question studied here is: for a fixed degree budget $m$ on $n$ nodes, what worst-case relay redundancy $R(n,m)$ is achievable, and how should a network exploit the resulting relay alternatives after failures? This $R(n,m)$ landscape has not previously been mapped for practical circulant generator families. The paper therefore treats known difference-basis theory as background and focuses on the network-layer consequences: failure semantics, relay tables, recovery algorithms, load metrics, memory costs, exact small-instance calibration, measured lookup costs, and empirical design guidance for circulant interconnection networks.

\subsection{Contributions}

The paper makes the following contributions, grouped by role.
\begin{itemize}
\item \textbf{Network metric and theory.} It defines shared-relay multiplicity as a local fault-tolerance metric, introduces $D_f(n)$ and $R(n,m)$ as degree--redundancy parameters, and proves the exact relay--difference equivalence, counting lower bounds, monotonicity, transfer bounds, adversarial recovery guarantees, and random-failure survival probabilities.
\item \textbf{Implementable relay layer.} It gives preprocessing, deterministic recovery, load-aware relay selection, and vectorized greedy construction algorithms. The recovery algorithms are proved correct, while the greedy construction is explicitly treated as a heuristic whose outputs are evaluated exactly.
\item \textbf{Systems cost and balance.} It quantifies preprocessing, routing-table storage, online lookup cost, and a scoped translation-invariant relay-balance property for complete offset sweeps, while explicitly stating that non-uniform traffic requires measured load counters or traffic-aware ordering.
\item \textbf{Reproducible design study.} It reports a checkpointed experiment over 526,539 circulant designs with $n\in\{251,503,1009,2003,5003,10007\}$ and $f\in\{0,\ldots,5\}$. The optimized run completed in about 3995 seconds and produced exact multiplicity values for every design.
\end{itemize}

\subsection{What This Enables Beyond the Difference-Basis Statement}

The difference-basis equivalence alone says that a set of generator differences covers every terminal offset. It does not specify how a network should store relay alternatives, skip failed relays, choose among multiple surviving relays, quantify relay-table memory, compare adversarial and random failure models, or expose why common generator families behave poorly under worst-case relay redundancy. Those questions are the systems contribution of this paper. The framework converts the abstract multiplicity condition into a routing-table primitive: for every offset $d$, store the offset pairs $(a,b)$ with $b-a=d$; for terminals $(u,v)$, convert each $(a,b)$ into relay $r=u-a=v-b$; skip failed relays; optionally choose the least-loaded survivor.

\section{Related Work}
\label{sec:related}

\subsection{Circulant and Cayley Interconnection Networks}

Circulant and Cayley networks have long been studied because they combine symmetry, compact addressing, and efficient routing. Their standard matrix representation and algebraic shift structure are treated in classical circulant-matrix references \cite{GrayToeplitzCirculant2005,KalmanWhite2001}; this is the algebraic basis for representing every local connection pattern by a single generator vector. Classical network work considers connectivity, diameter, broadcasting, containers, wide diameter, fault diameter, and routing in circulant and Cayley networks \cite{BoeschTindell1984,BermondComellasHsu1995,Hwang2003,Monakhova2012,LiawChangTang1998,Xu2001,Duato2003}. Circulant and loop-network routing also appears in fault-tolerant distributed-loop routing and recent network-on-chip studies \cite{MukhopadhyayaSinha1995,RomanovAccess2020,MonakhovaTNSE2023}. Ring interconnects, which are degree-two circulants, have also appeared in commercial and research multicore settings \cite{IntelRingInterconnect,ArimilliPERCS2010}. The present paper addresses a complementary local primitive: common in-neighborhood relay multiplicity for immediate two-hop relay alternatives without path search.

\subsection{Difference Bases, $g$-Difference Bases, and Difference Covers}

A subset $B$ of an abelian group $G$ is a difference basis if every group element can be represented as $a-b$ for some $a,b\in B$. Banakh and Gavrylkiv studied difference bases in cyclic groups and proved lower and upper bounds for the cyclic difference size \cite{BanakhGavrylkiv2019}. Multiplicity variants require every element to have at least $g$ difference representations. Schmutz and Tait studied cardinalities of $g$-difference bases and proved asymptotic results for normalized sizes in their setting \cite{SchmutzTait2025}. Recent work of Li and Yip treats generalized additive and difference bases in finite abelian groups \cite{LiYip2025}. The latter two sources are cited as arXiv preprints; therefore, the multiplicity-$g$ asymptotic discussion below is stated as a transfer statement rather than as a new unconditional theorem for cyclic circulants.

\subsection{Common-Neighbor Counts}

Common-neighbor counts also arise in special Cayley and circulant graphs. Klotz and Sander derived formulas for common neighbors in unitary Cayley graphs \cite{KlotzSander2007}. In contrast, the present paper attaches a fault-tolerance semantics to the common in-neighborhood: it is the set of immediate relay alternatives that survive relay-node failures.

\section{Network Model, Notation, and Parameters}
\label{sec:model}

\begin{table}[H]
\centering
\caption{Notation used throughout the paper.}
\label{tab:notation}
\begin{tabular}{ll}
\toprule
Symbol & Meaning \\
\midrule
$n$ & number of nodes \\
$\Zn$ & cyclic node set $\{0,\ldots,n-1\}$ \\
$S$ & circulant generator set, $S\subseteq\Zn\setminus\{0\}$ \\
$m$ & directed degree, $m=|S|$ \\
$G$ & directed circulant $\Cay(\Zn,S)$ \\
$u,v$ & ordered terminal pair, $u\ne v$ \\
$d$ & terminal offset $d=v-u\pmod n$ \\
$\RS(u,v)$ & shared-relay set of $(u,v)$ \\
$\lambda_S(d)$ & number of ordered generator pairs with $b-a=d$ \\
$R(S)$ & worst-case nonzero relay multiplicity \\
$R(n,m)$ & best possible $R(S)$ among degree-$m$ circulants \\
$D_f(n)$ & minimum degree for $f$-relay-fault tolerance \\
$F$ & failed relay-node set \\
$P_d$ & relay-offset table for terminal offset $d$ \\
\bottomrule
\end{tabular}
\end{table}

Let $n\ge2$ and let $S\subseteq\Zn$. The directed circulant $G=\Cay(\Zn,S)$ has vertex set $\Zn$ and directed edge set
\[
E=\{(x,y):y-x\in S\}.
\]
The directed degree is $m=|S|$. Unless otherwise stated, $0\notin S$, so self-loops are excluded. For a node $u$, its in-neighborhood is
\[
N^-(u)=\{r\in\Zn:u-r\in S\}=u-S.
\]

\begin{definition}[Shared-relay set]
For distinct $u,v\in\Zn$, the shared-relay set of $(u,v)$ is
\[
\RS(u,v)=N^-(u)\cap N^-(v).
\]
Thus $r\in\RS(u,v)$ if and only if $r\to u$ and $r\to v$.
\end{definition}

\begin{definition}[Difference multiplicity]
For $d\in\Zn$, define
\[
\lambda_S(d)=|\{(a,b)\in S^2:b-a=d\}|.
\]
The worst-case nonzero shared-relay multiplicity is
\[
R(S)=\min_{d\in\Zn\setminus\{0\}}\lambda_S(d).
\]
\end{definition}

The definition of $\lambda_S(d)$ includes all ordered pairs in $S^2$. For $d\ne0$, every counted pair automatically has $a\ne b$. For $d=0$, because $S$ is a set, the only counted pairs are $(a,a)$ and hence $\lambda_S(0)=m$. Throughout the paper, fault tolerance depends only on nonzero offsets.

\begin{definition}[$f$-relay-fault tolerance]
The circulant $\Cay(\Zn,S)$ is $f$-relay-fault-tolerant if for every ordered pair $u\ne v$ and every set $F\subseteq\Zn$ with $|F|\le f$, there exists a relay
\[
r\in\RS(u,v)\setminus F.
\]
\end{definition}

\begin{definition}[Degree--redundancy parameters]
\label{def:Rnm}
For $0\le m\le n-1$, define
\[
R(n,m)=\max_{\substack{S\subseteq\Zn\setminus\{0\}\\ |S|=m}} R(S).
\]
For the base case $m=0$, the maximum is taken over the single empty generator set and $R(n,0)=0$. For $f\ge0$, define
\[
D_f(n)=\min\{m\in\{0,\ldots,n-1\}:R(n,m)\ge f+1\},
\]
when such a degree exists.
\end{definition}

\section{Relay--Difference Equivalence and Design Bounds}
\label{sec:equiv}

\begin{theorem}[Relay--difference equivalence]
\label{thm:equiv}
For any distinct $u,v\in\Zn$,
\[
|\RS(u,v)|=\lambda_S(v-u).
\]
Moreover, the map
\[
(a,b)\in S^2,\quad b-a=v-u
\]
to
\[
r=u-a=v-b
\]
is a bijection between difference representations of $v-u$ and shared relays of $(u,v)$.
\end{theorem}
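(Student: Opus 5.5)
The plan is to exhibit the map and its inverse explicitly and check that both are well defined. Fix distinct $u,v\in\Zn$ and set $d=v-u$. Let $P_d=\{(a,b)\in S^2 : b-a=d\}$, so that $|P_d|=\lambda_S(d)$ by definition. Define $\phi:P_d\to\Zn$ by $\phi(a,b)=u-a$.

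First I verify that $\phi$ lands in $\RS(u,v)$. From $b-a=v-u$ we get $u-a=v-b$, so the two formulas in the statement agree. Next, $u-\phi(a,b)=a\in S$ gives $\phi(a,b)\to u$. Likewise, $v-\phi(a,b)=v-u+a=b\in S$ gives $\phi(a,b)\to v$. Hence $\phi(a,b)\in N^-(u)\cap N^-(v)=\RS(u,v)$.

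Second, I construct the inverse $\psi:\RS(u,v)\to P_d$ by $\psi(r)=(u-r,\,v-r)$. If $r\in\RS(u,v)$, then by the description $N^-(x)=x-S$ from Section~\ref{sec:model} both $u-r$ and $v-r$ lie in $S$. Their difference is $(v-r)-(u-r)=v-u=d$, so $\psi(r)\in P_d$. I then check the two compositions:
\[
\psi(\phi(a,b))=(a,\,v-u+a)=(a,b)
\quad\text{and}\quad
\phi(\psi(r))=u-(u-r)=r.
\]
So $\phi$ is a bijection, and taking cardinalities gives $|\RS(u,v)|=\lambda_S(v-u)$.

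There is no real obstacle here. The only point needing care is that all arithmetic is in $\Zn$, which is an abelian group, so the cancellations are valid. Also, $d\neq0$ is not actually used in the argument; the bijection holds verbatim, and distinctness of $u,v$ matters only for the interpretation as a relay pair.
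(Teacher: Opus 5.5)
Your proposal is correct and follows essentially the same route as the paper: the forward map $r=u-a$, the inverse $r\mapsto(u-r,\,v-r)$, and a check that they are mutually inverse, with the counting equality read off from the bijection. You verify both compositions explicitly, which the paper only asserts, and your remark that $d\neq0$ is never used is accurate.
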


\begin{proof}
If $(a,b)\in S^2$ and $b-a=v-u$, define $r=u-a$. Then $u-r=a\in S$ and
\[
v-r=v-u+a=b\in S,
\]
so $r\in\RS(u,v)$. Conversely, if $r\in\RS(u,v)$, set $a=u-r$ and $b=v-r$. Then $a,b\in S$ and
\[
b-a=(v-r)-(u-r)=v-u.
\]
The two maps are inverse to each other, proving the bijection. Counting gives the equality.
\end{proof}

\begin{corollary}
\label{cor:fiff}
A directed circulant $\Cay(\Zn,S)$ is $f$-relay-fault-tolerant if and only if
\[
R(S)\ge f+1.
\]
\end{corollary}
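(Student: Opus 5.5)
The corollary should follow directly from Theorem~\ref{thm:equiv}. The plan is to reduce $f$-relay-fault tolerance to a statement about the sizes $|\RS(u,v)|$, and then translate those sizes into difference multiplicities.

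First, for a fixed ordered pair $u\ne v$, I will show that the condition ``for every $F\subseteq\Zn$ with $|F|\le f$ there is $r\in\RS(u,v)\setminus F$'' is equivalent to $|\RS(u,v)|\ge f+1$.
\begin{itemize}
\item If $|\RS(u,v)|\ge f+1$, then for any $|F|\le f$ we have $|\RS(u,v)\setminus F|\ge |\RS(u,v)|-|F|\ge 1$, so a surviving relay exists.
\item Conversely, if $|\RS(u,v)|\le f$, take $F=\RS(u,v)$ itself. This is a legal failure set with $|F|\le f$, and it leaves $\RS(u,v)\setminus F=\emptyset$, so tolerance fails for this pair.
\end{itemize}
Taking both directions over all ordered pairs gives: $\Cay(\Zn,S)$ is $f$-relay-fault-tolerant if and only if $\min_{u\ne v}|\RS(u,v)|\ge f+1$.

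Second, Theorem~\ref{thm:equiv} gives $|\RS(u,v)|=\lambda_S(v-u)$. As $(u,v)$ ranges over ordered pairs of distinct nodes, the offset $d=v-u$ ranges over all of $\Zn\setminus\{0\}$; for instance, $u=0$ and $v=d$ realize each such $d$. Hence
\[
\min_{u\ne v}|\RS(u,v)|=\min_{d\ne0}\lambda_S(d)=R(S),
\]
which yields the claimed equivalence.

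There is no real obstacle here. The only points needing care are that the adversarial set $F=\RS(u,v)$ is admissible, since the definition allows arbitrary $F\subseteq\Zn$, and that every nonzero offset is realized by some pair, so the minimum over pairs equals the minimum over offsets defining $R(S)$.
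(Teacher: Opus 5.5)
Your proof is correct and follows essentially the same route as the paper: Theorem~\ref{thm:equiv} identifies $|\RS(u,v)|$ with $\lambda_S(v-u)$, at most $f$ failures cannot exhaust a relay set of size $f+1$, and the adversarial choice $F=\RS(u,v)$ gives the converse. Your explicit remark that every nonzero offset is realized by some pair (e.g., $u=0$, $v=d$) fills in a step the paper leaves implicit when passing from pairs to $R(S)$.
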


\begin{proof}
By Theorem~\ref{thm:equiv}, every ordered pair with offset $d\ne0$ has exactly $\lambda_S(d)$ shared relays. Thus every pair has at least $f+1$ shared relays if and only if $R(S)\ge f+1$. Removing at most $f$ failed relays cannot exhaust a set of size at least $f+1$. Conversely, if some pair $(u,v)$ has at most $f$ relays, the adversary may choose $F=\RS(u,v)$; then $|F|\le f$ and every shared-relay option for that pair is failed.
\end{proof}

\begin{theorem}[Counting upper bound on $R(n,m)$]
\label{thm:count-upper}
For every $n$ and $m$,
\[
R(n,m)\le \left\lfloor\frac{m(m-1)}{n-1}\right\rfloor.
\]
\end{theorem}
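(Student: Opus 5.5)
The plan is a double-counting argument over ordered pairs of distinct generators. Fix any $S\subseteq\Zn\setminus\{0\}$ with $|S|=m$. We will sum $\lambda_S(d)$ over all nonzero offsets $d$ and compare this sum with $(n-1)R(S)$. Since $R(n,m)$ is the maximum of $R(S)$ over such sets, a bound valid for every $S$ transfers directly to $R(n,m)$.

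The first step is the identity
\[
\sum_{d\in\Zn\setminus\{0\}}\lambda_S(d)=m(m-1).
\]
Every ordered pair $(a,b)\in S^2$ is counted in exactly one $\lambda_S(d)$, namely the one with $d=b-a$. Hence $\sum_{d\in\Zn}\lambda_S(d)=m^2$. As noted after the definition of difference multiplicity, $\lambda_S(0)=m$, because only the diagonal pairs $(a,a)$ have difference $0$. Subtracting gives the displayed identity.

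The second step uses the definition of $R(S)$ as a minimum over the $n-1$ nonzero offsets. Each nonzero $d$ satisfies $\lambda_S(d)\ge R(S)$, so
\[
(n-1)R(S)\le\sum_{d\ne0}\lambda_S(d)=m(m-1),
\]
and therefore $R(S)\le m(m-1)/(n-1)$. Because $R(S)$ is an integer, we may take the floor. Maximizing over $S$ then gives the bound on $R(n,m)$.

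The only points needing care are edge cases. If $m=0$, then $R(n,0)=0$ by definition, and the bound reads $0\le0$. Since $n\ge2$, the divisor $n-1$ is positive. No step here is a real obstacle; the one thing to get right is excluding the $d=0$ term correctly, which the earlier remark $\lambda_S(0)=m$ handles.
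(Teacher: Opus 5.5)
Your proof is correct and matches the paper's argument: both show $\sum_{d\ne0}\lambda_S(d)=m(m-1)$, bound the minimum over the $n-1$ nonzero offsets by the average, and maximize over $S$. The paper counts the $m(m-1)$ off-diagonal ordered pairs directly rather than subtracting $\lambda_S(0)=m$ from $m^2$, and it leaves implicit the integrality step that justifies the floor, which you state explicitly.
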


\begin{proof}
For nonzero offsets, every ordered pair $(a,b)\in S^2$ with $a\ne b$ contributes to exactly one $d=b-a\ne0$. Since there are $m(m-1)$ ordered pairs with $a\ne b$,
\[
\sum_{d\ne0}\lambda_S(d)=m(m-1).
\]
The minimum over $n-1$ nonzero offsets is at most the average. Maximizing over $S$ gives the result.
\end{proof}

\begin{theorem}[Interval circulants are structurally weak]
\label{thm:interval-negative}
Let $S_m=\{1,2,\ldots,m\}\subseteq\mathbb Z_n$. If $n>2m-1$, then
\[
R(S_m)=0.
\]
More precisely, for $1\le d\le m-1$, $\lambda_{S_m}(d)=m-d$; for $n-m+1\le d\le n-1$, $\lambda_{S_m}(d)=m-(n-d)$; and all remaining nonzero offsets have multiplicity zero.
\end{theorem}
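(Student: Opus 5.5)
The plan is to compute $\lambda_{S_m}(d)$ directly from the definition by lifting the generator set to the integers. Every ordered pair $(a,b)\in S_m^2$ has integer difference $b-a\in\{-(m-1),\ldots,m-1\}$, and its residue modulo $n$ is the offset it contributes to. So the proof reduces to two tasks: count the ordered pairs with a prescribed integer difference, and then check that distinct integer differences in this range land on distinct residues when $n>2m-1$.

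\textbf{Counting by integer difference.} For an integer $k$ with $0\le k\le m-1$, the pairs with $b-a=k$ are $(a,a+k)$ with $1\le a\le m-k$. There are exactly $m-k$ of them. By the symmetry $(a,b)\mapsto(b,a)$, the same count $m-|k|$ holds for every $k$ with $-(m-1)\le k\le -1$.

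\textbf{Reduction modulo $n$.} The integers $-(m-1),\ldots,m-1$ form a window of $2m-1$ consecutive values. Since $n>2m-1$, i.e.\ $n\ge 2m$, no two of them are congruent modulo $n$.
\begin{itemize}
\item Each positive $k\in\{1,\ldots,m-1\}$ gives the residue $d=k$, so $\lambda_{S_m}(d)=m-d$.
\item Each negative $k$ gives the residue $d=n+k$, which ranges over $\{n-m+1,\ldots,n-1\}$. Here $|k|=n-d$, so $\lambda_{S_m}(d)=m-(n-d)$.
\item These two residue ranges are disjoint because $m-1<n-m+1$, which is again equivalent to $n>2m-2$.
\end{itemize}
Every nonzero $d$ outside both ranges, namely $m\le d\le n-m$, is hit by no integer difference, so $\lambda_{S_m}(d)=0$. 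This range is nonempty exactly when $m\le n-m$, i.e.\ $n\ge 2m$, which is the hypothesis. Taking the minimum over nonzero offsets then gives $R(S_m)=0$.

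\textbf{Expected obstacle.} The step needing the most care is the bookkeeping of the hypothesis $n>2m-1$. It has to be used twice: once for the injectivity of the reduction modulo $n$ (no wraparound collisions), and once for the nonemptiness of the zero range. I will also note the degenerate case $m=0$ or $m=1$ separately. There all nonzero multiplicities vanish trivially, the first two ranges are empty, and the statement holds vacuously.
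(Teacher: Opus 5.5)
Your proof is correct and follows essentially the same route as the paper: you lift differences to the integer window $[-(m-1),m-1]$, count the $m-|k|$ representations of each integer difference $k$, and use $n\ge 2m$ to show that the two residue arcs are disjoint and leave the offsets $m,\ldots,n-m$ uncovered. Your explicit step that reduction modulo $n$ is injective on this window makes precise what the paper uses only implicitly when it reads off the exact counts $m-d$ and $m-(n-d)$.
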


\begin{proof}
For $a,b\in S_m$, the ordinary integer difference $b-a$ lies in $[-m+1,m-1]$. Therefore, modulo $n$, the only nonzero residues represented by such differences are $1,\ldots,m-1$ and $n-m+1,\ldots,n-1$. If $n>2m-1$, then $n-m+1>m$, so the upper arc $\{n-m+1,\ldots,n-1\}$ begins strictly above the lower arc $\{1,\ldots,m-1\}$; the two arcs are therefore disjoint and together cover exactly $2(m-1)$ of the $n-1$ nonzero offsets, leaving $n-1-2(m-1)=n-2m+1\ge1$ offsets with multiplicity zero. For $1\le d\le m-1$, the pairs are $(1,1+d),\ldots,(m-d,m)$, giving $m-d$ representations. The negative offsets are symmetric: residue $n-d$ has the $m-d$ representations $(1+d,1),\ldots,(m,m-d)$. All other nonzero offsets have no representation. The boundary is tight: when $n=2m-1$, the lower arc and upper arc partition all nonzero residues.
\end{proof}

This theorem gives a non-empirical reason why interval generator sets are poor shared-relay designs: unless their degree is already at least roughly half the network size, they are not even $0$-relay-fault-tolerant.

\begin{corollary}[Degree lower bound]
\label{cor:degree-lb}
If $\Cay(\Zn,S)$ is $f$-relay-fault-tolerant with $|S|=m$, then
\[
m(m-1)\ge (f+1)(n-1).
\]
Consequently,
\[
D_f(n)\ge
\left\lceil\frac{1+\sqrt{1+4(f+1)(n-1)}}{2}\right\rceil.
\]
\end{corollary}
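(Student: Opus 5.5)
The argument combines Corollary~\ref{cor:fiff} with the counting identity from the proof of Theorem~\ref{thm:count-upper}, and then solves a quadratic inequality in $m$.

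\textbf{Step 1: the inequality $m(m-1)\ge (f+1)(n-1)$.} Suppose $\Cay(\Zn,S)$ is $f$-relay-fault-tolerant with $|S|=m$. By Corollary~\ref{cor:fiff}, $R(S)\ge f+1$, so $\lambda_S(d)\ge f+1$ for every one of the $n-1$ nonzero offsets $d$. Summing over these offsets and using the identity $\sum_{d\ne0}\lambda_S(d)=m(m-1)$ gives
\[
m(m-1)=\sum_{d\ne0}\lambda_S(d)\ge (f+1)(n-1).
\]
Equivalently, Theorem~\ref{thm:count-upper} gives $f+1\le R(S)\le R(n,m)\le \lfloor m(m-1)/(n-1)\rfloor$, which yields the same inequality.

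\textbf{Step 2: the bound on $D_f(n)$.} Let $m=D_f(n)$; we may assume it exists. By definition $R(n,m)\ge f+1$, so some generator set $S$ with $|S|=m$ has $R(S)\ge f+1$. By Corollary~\ref{cor:fiff} this circulant is $f$-relay-fault-tolerant, so Step~1 applies and $m^2-m-(f+1)(n-1)\ge 0$.

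Since $m\ge 0$ and the constant term $-(f+1)(n-1)$ is negative (as $n\ge2$), the quadratic $x^2-x-(f+1)(n-1)$ has one negative root and one positive root. It is nonnegative at a nonnegative $x$ only when $x$ is at least the positive root. Hence
\[
m\ge \frac{1+\sqrt{1+4(f+1)(n-1)}}{2}.
\]
Because $m$ is an integer, we may take the ceiling of the right-hand side, which gives the stated bound.

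\textbf{Expected difficulty.} There is no real obstacle. The only points needing care are ruling out the negative root, which uses $m\ge0$, and noting that the bound is vacuous when $D_f(n)$ does not exist.
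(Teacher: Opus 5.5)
Your proposal is correct and follows the paper's proof. The paper uses Corollary~\ref{cor:fiff} to get $\lambda_S(d)\ge f+1$ on all $n-1$ nonzero offsets, sums using $\sum_{d\ne0}\lambda_S(d)=m(m-1)$, and solves the quadratic. Your handling of the negative root and of the existence of $D_f(n)$ just makes explicit what the paper leaves implicit.
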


\begin{proof}
By Corollary~\ref{cor:fiff}, every nonzero offset must have multiplicity at least $f+1$. Summing over the $n-1$ nonzero offsets gives the first inequality. The displayed lower bound follows by solving the quadratic inequality in $m$.
\end{proof}

\begin{proposition}[Monotonicity]
\label{prop:mono}
For fixed $n$, $R(n,m)$ is nondecreasing in $m$, and $D_f(n)$ is nondecreasing in $f$.
\end{proposition}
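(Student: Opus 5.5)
The plan is to derive monotonicity of $R(n,m)$ from the fact that enlarging a generator set never decreases any difference multiplicity. Monotonicity of $D_f(n)$ then follows because the threshold $f+1$ defining $D_f(n)$ only becomes harder to meet as $f$ grows.

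\textbf{Monotonicity in $m$.} Fix $n$ and $0\le m<n-1$. Choose $S\subseteq\Zn\setminus\{0\}$ with $|S|=m$ and $R(S)=R(n,m)$; the maximum in Definition~\ref{def:Rnm} is over a finite nonempty family, so such an $S$ exists. For $m=0$ this is $S=\emptyset$. Since $|S|=m<n-1$, some $x\in\Zn\setminus(S\cup\{0\})$ exists, and we set $S'=S\cup\{x\}$, so $|S'|=m+1$. Because $S^2\subseteq S'^2$, every ordered pair counted in $\lambda_S(d)$ is also counted in $\lambda_{S'}(d)$. Hence $\lambda_{S'}(d)\ge\lambda_S(d)$ for every $d\ne0$, and taking minima gives $R(S')\ge R(S)$. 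Therefore
\[
R(n,m+1)\ge R(S')\ge R(S)=R(n,m).
\]
Induction on $m$ yields the claim for all $0\le m\le n-1$. Alternatively, one can read the same inequality through Theorem~\ref{thm:equiv}: adding a generator only adds in-neighbors, so shared-relay sets can only grow.

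\textbf{Monotonicity in $f$.} Let $f\le f'$ and suppose $D_{f'}(n)$ exists; call it $m'$. Then $R(n,m')\ge f'+1\ge f+1$, so $m'$ lies in the set whose minimum defines $D_f(n)$. Hence $D_f(n)$ exists and $D_f(n)\le m'=D_{f'}(n)$. This step does not even need monotonicity in $m$. That monotonicity does give a stronger reading: the set $\{m: R(n,m)\ge f+1\}$ is an upward-closed interval, so $D_f(n)$ is a genuine threshold degree.

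\textbf{Main obstacle.} There is no real difficulty here. The only points needing care are bookkeeping: the new generator must avoid $0$ and $S$, which is possible exactly when $m\le n-2$; and $D_f(n)$ is defined only "when such a degree exists." I would handle the latter by stating that whenever $D_{f'}(n)$ is defined, $D_f(n)$ is defined and no larger. Equivalently, one adopts the convention that an undefined value counts as $+\infty$.
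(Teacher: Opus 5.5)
Your proof is correct and follows essentially the same route as the paper: add a fresh generator $x\notin S\cup\{0\}$, observe that every old difference representation survives so $R(S\cup\{x\})\ge R(S)$, conclude $R(n,m+1)\ge R(n,m)$, and obtain the $D_f(n)$ claim from the fact that the threshold $f+1$ only gets harder to meet as $f$ grows. Your extra bookkeeping also settles two points the paper leaves implicit: $x$ exists only when $m\le n-2$, and you handle $D_f(n)$ being defined only when some degree attains $R(n,m)\ge f+1$.
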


\begin{proof}
If $S$ has degree $m$ and $x\notin S\cup\{0\}$, then every old difference representation remains present in $S\cup\{x\}$. Hence $R(S\cup\{x\})\ge R(S)$. For every degree-$m$ set $S$, the set $S\cup\{x\}$ is one admissible degree-$(m+1)$ set, so
\[
R(n,m+1)=\max_{|T|=m+1} R(T)
   \ge R(S\cup\{x\})\ge R(S).
\]
Taking the maximum over all degree-$m$ sets $S$ gives $R(n,m+1)\ge R(n,m)$. Since the requirement $R(S)\ge f+1$ becomes stronger as $f$ increases, $D_f(n)$ is nondecreasing in $f$.
\end{proof}

\begin{proposition}[Verifiable certificates for relay redundancy]
\label{prop:certificates}
For a fixed generator set $S$ with $|S|=m$, the value $R(S)$ can be verified in $O(m^2+n)$ time and $O(n)$ auxiliary space by enumerating all ordered pairs $(a,b)\in S^2$, incrementing the counter for $d=b-a$, and taking the minimum over the $n-1$ nonzero counters. Consequently, the decision problem
\[
\exists S\subseteq \Zn\setminus\{0\},\quad |S|=m,
\quad R(S)\ge t
\]
belongs to NP when $S$ is used as the certificate.
\end{proposition}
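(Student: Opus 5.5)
The plan is to make the verification procedure explicit, bound its time and space directly, and then read off NP membership from the standard certificate-verifier characterization.

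\textbf{The algorithm.} We allocate an integer array $C[0..n-1]$ initialized to zero; this costs $O(n)$ time and $O(n)$ space. Then, for each ordered pair $(a,b)\in S^2$, we compute $d=(b-a)\bmod n$ and increment $C[d]$. There are exactly $m^2$ such pairs, each handled with $O(1)$ word operations, so this phase costs $O(m^2)$. After it finishes, $C[d]=|\{(a,b)\in S^2:b-a=d\}|=\lambda_S(d)$ for every $d$, by the definition of difference multiplicity. A final scan over $d=1,\dots,n-1$ returns $\min_{d\ne0}C[d]=R(S)$ in $O(n)$ time. In total this is $O(m^2+n)$ time, with auxiliary space equal to the array $C$, which is $O(n)$. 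Theorem~\ref{thm:equiv} is not needed for this count, although it confirms that the computed value is the worst-case shared-relay multiplicity. If one prefers to skip the diagonal pairs $a=b$, which only affect $C[0]$, the bound is unchanged.

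\textbf{NP membership.} The certificate is the set $S$ itself, given as a list of $m\le n-1$ residues. The verifier works in three steps:
\begin{itemize}
\item it checks that the entries are distinct, nonzero, lie in $\{0,\dots,n-1\}$, and number exactly $m$, which takes $O(m+n)$ time using a bitmap;
\item it computes $R(S)$ with the procedure above;
\item it compares $R(S)$ with $t$.
\end{itemize}
Soundness and completeness are immediate: a valid $S$ with $R(S)\ge t$ exists if and only if some certificate is accepted.

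\textbf{Main obstacle: the input encoding.} The instance is $(n,m,t)$. If these are written in binary, a running time of $O(m^2+n)$ is polynomial in $n$ but not in $\log n$, and the certificate itself has length $\Theta(m\log n)$, which may be exponential in the input length. So "polynomial" has to be read relative to unary $n$, or equivalently relative to the size of the network. I would state this convention explicitly: $n$ is given in unary, as is natural when the object being designed is an $n$-node network. Under that convention the certificate has size $O(n\log n)$ and verification takes $O(n^2)$ time, which is polynomial in the input length. This is the only step that needs care; everything else is routine bookkeeping.
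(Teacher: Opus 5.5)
Your proposal is correct and follows the paper's argument: a single $O(m^2+n)$ pass that fills an offset-counter array over all pairs in $S^2$, then a minimum over the $n-1$ nonzero counters, with $S$ itself serving as the NP certificate. Your explicit unary-$n$ convention is the same caveat the paper states as ``when the group elements are listed explicitly''; either way, time polynomial in $n$ becomes polynomial in the input size.
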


\begin{proof}
There are $m^2$ ordered pairs in $S^2$. Each pair contributes to exactly one modular difference counter, and the $m$ diagonal pairs contribute only to the zero offset. After all counters are built, scanning the $n-1$ nonzero counters gives $R(S)$. If $R(S)\ge t$, the set $S$ is a polynomial-size certificate for the displayed existential statement, and the verification just described is polynomial in the input size when the group elements are listed explicitly.
\end{proof}

\begin{corollary}[Certified upper bounds from constructed designs]
\label{cor:certified-upper}
If an experiment outputs a degree-$m$ generator set $S$ with $R(S)\ge f+1$, then the output is a verifiable certificate that
\[
D_f(n)\le m.
\]
If an exhaustive search additionally proves that no degree below $m$ achieves $R(S)\ge f+1$, then $D_f(n)=m$.
\end{corollary}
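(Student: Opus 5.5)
The first claim follows from the definitions of $R(n,m)$ and $D_f(n)$, with Proposition~\ref{prop:certificates} used only to justify calling $S$ a certificate. Suppose $S\subseteq\Zn\setminus\{0\}$ with $|S|=m$ and $R(S)\ge f+1$. Proposition~\ref{prop:certificates} shows that anyone can recompute $R(S)$ from $S$ in $O(m^2+n)$ time, so the inequality $R(S)\ge f+1$ is independently checkable. Since $S$ is admissible in the maximum of Definition~\ref{def:Rnm},
\[
R(n,m)\ge R(S)\ge f+1 .
\]
Hence $m$ belongs to the set $\{m'\in\{0,\ldots,n-1\}:R(n,m')\ge f+1\}$. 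That set is therefore nonempty, so $D_f(n)$ is defined, and its minimum is at most $m$, giving $D_f(n)\le m$.

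For the equality, I read ``exhaustive search proves no degree below $m$ achieves $R(S)\ge f+1$'' as the statement that every $S'\subseteq\Zn\setminus\{0\}$ with $|S'|=m'<m$ has $R(S')\le f$. Taking the maximum over all such $S'$ gives $R(n,m')\le f$ for every $m'<m$. No degree below $m$ therefore lies in the defining set of $D_f(n)$, so $D_f(n)\ge m$, and combined with the first part $D_f(n)=m$.

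By Proposition~\ref{prop:mono}, $R(n,\cdot)$ is nondecreasing, so it would suffice to check only $m'=m-1$. The full search over all $m'<m$ is therefore not even needed. I do not expect a real obstacle here. The only care needed is to state precisely what the exhaustive search certifies: it must quantify over all generator sets of each smaller degree, or, using monotonicity, over all sets of degree exactly $m-1$.
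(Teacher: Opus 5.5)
Your proof is correct and matches the paper's argument, which likewise unwinds Definition~\ref{def:Rnm} for the upper bound and treats the exhaustive search as the matching nonexistence certificate over all smaller degrees (the paper also cites Corollary~\ref{cor:fiff}, which you rightly do not need because $D_f(n)$ is defined through $R(n,m)$). Your further remark, that by Proposition~\ref{prop:mono} the search only needs to cover degree $m-1$, is valid and goes slightly beyond what the paper states.
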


\begin{proof}
The first statement is immediate from Definition~\ref{def:Rnm} and Corollary~\ref{cor:fiff}. The second adds the matching nonexistence certificate over all smaller degrees.
\end{proof}

\begin{proposition}[Transfer from loop-tolerant multiplicity]
\label{prop:zero-removal}
Let $A\subseteq\Zn$ be a set, possibly containing $0$, such that every nonzero $d$ has at least $g$ ordered representations $b-a=d$ with $a,b\in A$. Let $S=A\setminus\{0\}$. Then for every $d\ne0$,
\[
\lambda_S(d)\ge \lambda_A(d)-2.
\]
Consequently, if $g\ge f+3$, then $S$ is $f$-relay-fault-tolerant.
\end{proposition}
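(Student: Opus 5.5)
The plan is to compare the representation sets of $d$ in $A$ and in $S=A\setminus\{0\}$ directly, and show that deleting $0$ destroys at most two of them. If $0\notin A$, then $S=A$ and there is nothing to prove, so assume $0\in A$ and fix a nonzero offset $d$. Every representation counted by $\lambda_A(d)$ is an ordered pair $(a,b)\in A^2$ with $b-a=d$. Such a pair survives in $S^2$ unless $a=0$ or $b=0$. So
\[
\lambda_A(d)-\lambda_S(d)=\bigl|\{(a,b)\in A^2: b-a=d,\ a=0 \text{ or } b=0\}\bigr| .
\]

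Next I would count the lost pairs. A pair with $a=0$ must be $(0,d)$, since $b=a+d$ is then forced. A pair with $b=0$ must be $(-d,0)$. Because $d\neq0$, the two conditions $a=0$ and $b=0$ cannot hold at once, since that would give $d=0$. Hence the lost set is contained in $\{(0,d),(-d,0)\}$ and has at most two elements. One of these pairs is actually present only when $d\in A$, respectively $-d\in A$, but the upper bound of two is all that is needed. This gives $\lambda_S(d)\ge\lambda_A(d)-2$ for every $d\ne0$.

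For the consequence, the hypothesis gives $\lambda_A(d)\ge g\ge f+3$ for each nonzero $d$. Therefore $\lambda_S(d)\ge f+1$, so $R(S)\ge f+1$. Since $0\notin S$, Corollary~\ref{cor:fiff} applies to $\Cay(\Zn,S)$ and yields $f$-relay-fault tolerance. There is no real obstacle here. The only point needing care is that the two excluded pairs are distinct and that no other pair is lost, and both facts follow from $d\neq0$ forcing the partner coordinate.
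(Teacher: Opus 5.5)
Your proof is correct and follows the same route as the paper's. The only representations of a nonzero $d$ that use $0$ are $(0,d)$ and $(-d,0)$, so deleting $0$ loses at most two of them. The bound $\lambda_S(d)\ge g-2\ge f+1$ then gives $f$-relay-fault tolerance via Corollary~\ref{cor:fiff}.
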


\begin{proof}
For a fixed nonzero $d$, the representations counted by $\lambda_A(d)$ that use $0$ are only $(0,d)$ and $(-d,0)$, when the corresponding nonzero elements belong to $A$. Thus deleting $0$ removes at most two representations. The final statement follows from $\lambda_S(d)\ge g-2\ge f+1$.
\end{proof}

\begin{example}[The zero-removal loss can be two]
In $\mathbb Z_8$, let $A=\{0,1,6,7\}$ and consider $d=1$. The representations are
\[
(0,1),\quad (6,7),\quad (7,0),
\]
so $\lambda_A(1)=3$. Removing $0$ leaves $S=\{1,6,7\}$ and only $(6,7)$ remains, so $\lambda_S(1)=1$. The loss is exactly two. By contrast, if $A=\{0,1,6\}$, then deleting $0$ removes only the representation $(0,1)$ for $d=1$.
\end{example}

\begin{theorem}[Unconditional lower-bound scale]
\label{thm:asymptotic}
For every fixed $f\ge0$,
\[
D_f(n)=\Omega\!\left(\sqrt{(f+1)n}\right).
\]
More explicitly,
\[
D_f(n)\ge
\left\lceil\frac{1+\sqrt{1+4(f+1)(n-1)}}{2}\right\rceil.
\]
\end{theorem}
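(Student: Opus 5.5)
The explicit bound is exactly Corollary~\ref{cor:degree-lb}, so my plan is to re-derive it from that corollary and then extract the asymptotic form. First I would fix $f\ge0$ and $n\ge2$, and let $m=D_f(n)$, assuming this degree exists. Then some $S\subseteq\Zn\setminus\{0\}$ with $|S|=m$ has $R(S)\ge f+1$. By Corollary~\ref{cor:fiff}, $\Cay(\Zn,S)$ is $f$-relay-fault-tolerant. Corollary~\ref{cor:degree-lb}, or equivalently Theorem~\ref{thm:count-upper} combined with $f+1\le R(n,m)\le\lfloor m(m-1)/(n-1)\rfloor$, then gives $m(m-1)\ge(f+1)(n-1)$.

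Next I would solve the quadratic. Set $c=(f+1)(n-1)>0$. The polynomial $m^2-m-c$ has roots $\frac{1\pm\sqrt{1+4c}}{2}$, and the smaller root is negative. Since $m\ge0$, the inequality $m^2-m\ge c$ forces
\[
m\ge\frac{1+\sqrt{1+4c}}{2}.
\]
Because $m$ is an integer, I can take the ceiling, which yields the explicit display.

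For the $\Omega$ statement, I would use $\sqrt{1+4c}\ge 2\sqrt{c}$ to get $D_f(n)\ge\sqrt{(f+1)(n-1)}$. For $n\ge2$ we have $n-1\ge n/2$, so
\[
D_f(n)\ge\tfrac{1}{\sqrt2}\sqrt{(f+1)n}.
\]
This is an absolute constant, uniform in $f$ and $n$, which gives $D_f(n)=\Omega(\sqrt{(f+1)n})$.

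The only delicate point is the existence caveat in Definition~\ref{def:Rnm}. If $f+1>n-2$, then $R(n,n-1)=n-2$ (every nonzero $d$ has $n-2$ representations in $\Zn\setminus\{0\}$), so $D_f(n)$ may be undefined. I would state that the bound applies whenever $D_f(n)$ is defined, which for fixed $f$ means all sufficiently large $n$. Under the alternative convention $D_f(n)=\infty$, the bound holds trivially.
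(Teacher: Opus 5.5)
Your proof is correct and follows the paper's route. The paper's proof just invokes Corollary~\ref{cor:degree-lb} for the explicit bound and reads off $\sqrt{(f+1)n}$ as the leading term; your version is more careful in two ways: it makes the $\Omega$ constant explicit, via $D_f(n)\ge\sqrt{(f+1)n/2}$ for all $n\ge2$, and it notes that the bound only has content where $D_f(n)$ is defined, i.e.\ $n\ge f+3$.
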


\begin{proof}
This is exactly Corollary~\ref{cor:degree-lb}; the asymptotic form follows because the leading term of the displayed lower bound is $\sqrt{(f+1)n}$.
\end{proof}

\begin{remark}[Conditional construction-transfer context]
For $f=0$, the lower-bound scale can be compared with peer-reviewed cyclic difference-basis upper bounds such as those of Banakh and Gavrylkiv~\cite{BanakhGavrylkiv2019}. For $f\ge1$, we do not state a peer-reviewed unconditional cyclic upper-bound theorem here. Instead, any explicit cyclic multiplicity-$(f+3)$ difference-basis construction of size $O_f(\sqrt n)$ would transfer through Proposition~\ref{prop:zero-removal} to a directed circulant family with $D_f(n)=O_f(\sqrt n)$. This observation is used only as context for future hybrid design and not as a claimed theorem of the present paper.
\end{remark}

\subsection{Worked Difference-Multiplicity Examples}

\begin{example}[A small generator set]
Let $n=13$ and $S=\{1,3,9\}$. The nonzero multiplicities are
\[
\begin{array}{c|cccccccccccc}
d&1&2&3&4&5&6&7&8&9&10&11&12\\\hline
\lambda_S(d)&0&1&0&0&1&1&1&1&0&0&1&0.
\end{array}
\]
Therefore $R(S)=0$. For $d=4$, there is no pair $(a,b)\in S^2$ with $b-a\equiv4\pmod{13}$, so a terminal pair at offset $4$ has no shared relay.
\end{example}

\begin{example}[Interval generators fail worst-case coverage]
Let $n=13$ and $S=\{1,2,3,4\}$. Then
\[
(\lambda_S(1),\ldots,\lambda_S(12))=(3,2,1,0,0,0,0,0,0,1,2,3),
\]
so $R(S)=0$. In particular, offsets $4,5,6,7,8,9$ have no shared relays. A same-degree translated Singer difference set, for example $S'=\{1,2,4,10\}$, has $\lambda_{S'}(d)=1$ for every $d\ne0$ in $\mathbb Z_{13}$ and hence is $0$-relay-fault-tolerant. This is self-contained in the following ordered-pair table:
\begin{center}
\scriptsize
\begin{tabular}{c c|c c|c c}
\toprule
$d$ & $(a,b)$ & $d$ & $(a,b)$ & $d$ & $(a,b)$ \\
\midrule
1&(1,2)&2&(2,4)&3&(1,4)\\
4&(10,1)&5&(10,2)&6&(4,10)\\
7&(10,4)&8&(2,10)&9&(1,10)\\
10&(4,1)&11&(4,2)&12&(2,1)\\
\bottomrule
\end{tabular}
\end{center}
Each entry satisfies $b-a\equiv d\pmod{13}$; for example, the $d=7$ entry uses $(a,b)=(10,4)$ because $4-10\equiv -6\equiv7\pmod{13}$.
\end{example}

Fig.~\ref{fig:spectrum} visualizes this phenomenon at a larger size. The worst-case bar, not the average bar, determines the relay-fault tolerance.

\begin{figure}[H]
\centering
\includegraphics[width=0.75\linewidth]{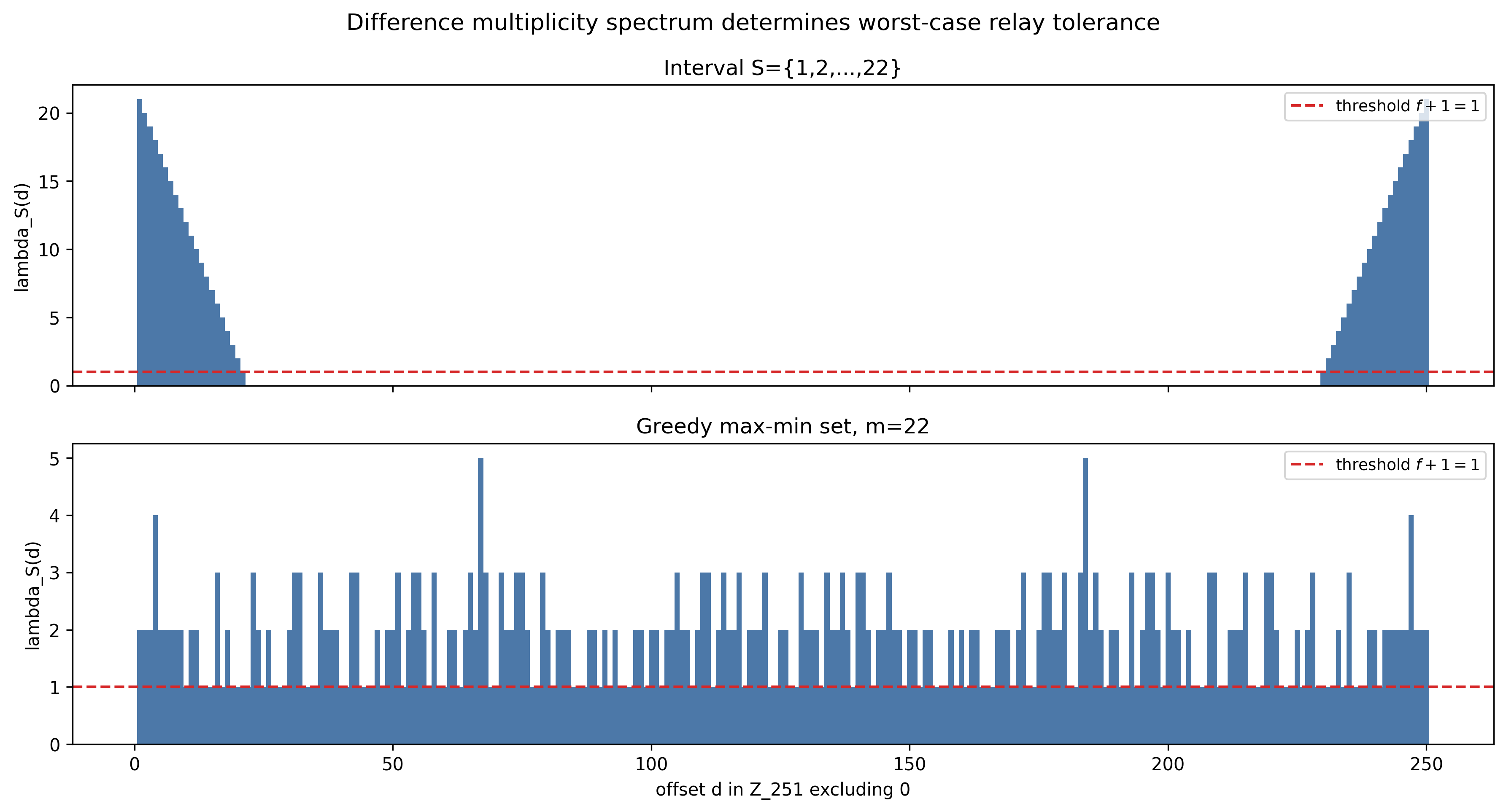}
\caption{Difference-multiplicity spectra for two degree-$22$ generator sets on $n=251$. The interval set has many offsets below the threshold, while the greedy set lifts all nonzero offsets to at least one shared relay.}
\label{fig:spectrum}
\end{figure}

\section{Failure Models and Recovery Guarantees}
\label{sec:failure}

\begin{theorem}[Worst-case relay recovery]
\label{thm:worst}
If $R(S)\ge f+1$, then after any failed relay set $F\subseteq\Zn$ with $|F|\le f$, every ordered terminal pair $u\ne v$ has a surviving relay in $\RS(u,v)\setminus F$.
\end{theorem}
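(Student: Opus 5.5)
The plan is to reduce the statement directly to Theorem~\ref{thm:equiv} and then finish with a counting argument; the proof is essentially the forward direction of Corollary~\ref{cor:fiff}, written out for an arbitrary failed set.

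First, fix an ordered terminal pair $u\ne v$ and set $d=v-u\pmod n$, so $d\ne0$. By Theorem~\ref{thm:equiv}, $|\RS(u,v)|=\lambda_S(d)$. Since $d$ is a nonzero offset, the definition of $R(S)$ as the minimum of $\lambda_S$ over nonzero offsets gives $\lambda_S(d)\ge R(S)\ge f+1$.

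Second, take an arbitrary $F\subseteq\Zn$ with $|F|\le f$. Then
\[
|\RS(u,v)\setminus F|\ \ge\ |\RS(u,v)|-|\RS(u,v)\cap F|\ \ge\ (f+1)-f\ =\ 1 .
\]
So at least one shared relay survives.

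There is no real obstacle here. The only care points are:
\begin{itemize}
\item $d\ne0$, so that $R(S)$ bounds $\lambda_S(d)$;
\item the argument is uniform in $F$, so it covers adversarial choices;
\item $F$ may contain the terminals themselves or nodes outside $\RS(u,v)$, which only makes $|\RS(u,v)\cap F|$ smaller and so only helps.
\end{itemize}

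Optionally, the surviving relay can be made explicit. Enumerate the pairs $(a,b)\in S^2$ with $b-a=d$ and form $r=u-a$. By the bijection in Theorem~\ref{thm:equiv}, these $r$ are exactly the distinct elements of $\RS(u,v)$. Since there are at least $f+1$ such pairs and at most $f$ failures, some pair yields $r\notin F$. This constructive form is what the later lookup algorithm relies on.
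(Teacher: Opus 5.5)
Your proposal is correct and matches the paper's proof: invoke Theorem~\ref{thm:equiv} to get $|\RS(u,v)|=\lambda_S(v-u)\ge R(S)\ge f+1$, using $v-u\ne0$, then note that at most $f$ failures cannot exhaust a set of at least $f+1$ relays. Your explicit inequality $|\RS(u,v)\setminus F|\ge(f+1)-f$ and the constructive remark only spell out the same argument in more detail.
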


\begin{proof}
By Theorem~\ref{thm:equiv}, $|\RS(u,v)|=\lambda_S(v-u)\ge R(S)\ge f+1$. A set $F$ of at most $f$ failed nodes cannot contain all elements of $\RS(u,v)$.
\end{proof}

\begin{theorem}[Random relay failures]
\label{thm:random}
Fix a terminal pair $(u,v)$ with $k=|\RS(u,v)|$. If $0\le q\le n$ failed relay nodes are chosen uniformly at random from $\Zn$, then the pair survives with probability
\[
P_{\mathrm{surv}}(n,k,q)=
1-\frac{\binom{n-k}{q-k}}{\binom{n}{q}},
\]
where the numerator is interpreted as $0$ if $q<k$.
\end{theorem}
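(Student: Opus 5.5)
The argument is a direct hypergeometric complement computation. I interpret ``$q$ failed relay nodes chosen uniformly at random'' as a failed set $F$ drawn uniformly from the $\binom{n}{q}$ subsets of $\Zn$ of size $q$. By Theorem~\ref{thm:equiv}, $|\RS(u,v)|=\lambda_S(v-u)=k$, and the pair $(u,v)$ survives exactly when $\RS(u,v)\setminus F\neq\emptyset$. So the plan is to compute the probability of the complementary event $\RS(u,v)\subseteq F$ and subtract it from one.

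To count the failing configurations, I would note that a $q$-subset $F$ contains the fixed $k$-set $\RS(u,v)$ if and only if $F=\RS(u,v)\cup F'$, where $F'$ is a $(q-k)$-subset of the remaining $n-k$ nodes. This correspondence is a bijection, so there are exactly $\binom{n-k}{q-k}$ such $F$ when $q\ge k$. When $q<k$, no $q$-set can contain a $k$-set, so the count is $0$, which matches the stated convention for the numerator. Dividing by $\binom{n}{q}$ gives $\Pr[\RS(u,v)\subseteq F]$, and taking the complement yields the formula.

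Finally I would check the edge cases so the formula holds on the whole range $0\le q\le n$. If $k=0$, the pair never survives. The formula agrees, since the numerator $\binom{n}{q}$ equals the denominator and $P_{\mathrm{surv}}=0$. When $q=n$ with $k\ge1$, the failure probability is $\binom{n-k}{n-k}/1=1$, so survival is $0$, as it should be. I do not expect a real obstacle here. The only care needed is fixing the sampling model (uniform $q$-subset, not independent per-node failures) and handling the $q<k$ and $k=0$ conventions.
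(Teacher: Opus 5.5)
Your proposal is correct and follows the same route as the paper: count the $q$-subsets containing the fixed $k$-element relay set as $\binom{n-k}{q-k}$ (zero when $q<k$), divide by $\binom{n}{q}$, and take the complement. Your edge-case checks for $k=0$ and $q=n$ go beyond what the paper writes, and both are consistent with the formula.
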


\begin{proof}
The pair fails exactly when all $k$ of its shared relays are included among the $q$ failed nodes. If $q<k$, this event is impossible. If $q\ge k$, the number of failure sets containing all $k$ relays is $\binom{n-k}{q-k}$, while the number of all $q$-node failure sets is $\binom{n}{q}$. Subtracting this failure probability from one gives the formula.
\end{proof}

\begin{corollary}[Independent relay failures]
If each relay node fails independently with probability $p$, then a terminal pair with $k$ shared relays fails with probability $p^k$. In particular, if $R(S)\ge f+1$, then every pair has independent-failure survival probability at least $1-p^{f+1}$.
\end{corollary}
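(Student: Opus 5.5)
The plan is to reduce the independent-failure statement to an elementary product computation over the shared-relay set. I will fix an ordered terminal pair $(u,v)$ with $u\ne v$ and let $k=|\RS(u,v)|$. As in the proof of Theorem~\ref{thm:random}, the pair fails exactly when every node of $\RS(u,v)$ has failed. Failures of nodes outside $\RS(u,v)$, including the terminals themselves, are irrelevant, since the model only concerns relay survival.

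First I will compute the failure probability of the pair. The event ``all $k$ relays fail'' is the intersection of the $k$ events ``$r$ fails'' for $r\in\RS(u,v)$. These events are mutually independent and each has probability $p$, so the intersection has probability $p^k$. When $k=0$ the empty intersection has probability $1=p^0$, so the formula also covers pairs with no shared relay. This establishes the first claim.

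Next I will show that the pair survives with probability at least $1-p^{f+1}$ when $R(S)\ge f+1$. By Theorem~\ref{thm:equiv}, $k=\lambda_S(v-u)\ge R(S)\ge f+1$. Because $0\le p\le1$, the map $k\mapsto p^k$ is nonincreasing, so $p^k\le p^{f+1}$. It follows that the survival probability $1-p^k$ is at least $1-p^{f+1}$. The boundary cases $p=0$ and $p=1$ need a quick check, but the monotonicity argument handles them as well.

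I do not expect a real obstacle here. The only point needing care is to state clearly that survival depends only on the relay set $\RS(u,v)$, with terminal failures excluded. With that settled, independence turns the failure event into a product of $k$ factors.
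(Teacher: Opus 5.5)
Your proposal is correct and follows the same argument as the paper: the pair fails exactly when all $k$ relays in $\RS(u,v)$ fail, independence gives probability $p^k$, and $k=\lambda_S(v-u)\ge R(S)\ge f+1$ together with $0\le p\le 1$ gives $p^k\le p^{f+1}$. Your extra checks for $k=0$ and for $p\in\{0,1\}$ are harmless refinements. Your remark that terminals are irrelevant is also consistent with the model, since $0\notin S$ means neither terminal can be a shared relay.
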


\begin{proof}
A pair fails exactly when all of its $k$ shared relays fail. Under independent node failures this event has probability $p^k$. Since $k\ge R(S)\ge f+1$, the failure probability is at most $p^{f+1}$.
\end{proof}

\begin{example}[Worst-case versus random failures]
For a worst-case pair with $k=R(S)=6$ relays in a network with $n=251$, any adversarial set of at most five failed relays leaves at least one relay alive. Under random failures, the same pair has survival probability exactly one for $q\le5$ because fewer than six failures cannot kill all six relays. For $q=10$, Theorem~\ref{thm:random} gives
\[
1-\frac{\binom{245}{4}}{\binom{251}{10}}\approx 0.99999996.
\]
The adversarial threshold is therefore conservative but essential: it gives a deterministic guarantee, while random failures are typically much less damaging.
\end{example}

Fig.~\ref{fig:random-vs-worst} compares the exact random-failure formula for a worst-case pair against empirical mean survival over sampled terminal pairs.

\begin{figure}[H]
\centering
\includegraphics[width=0.6\linewidth]{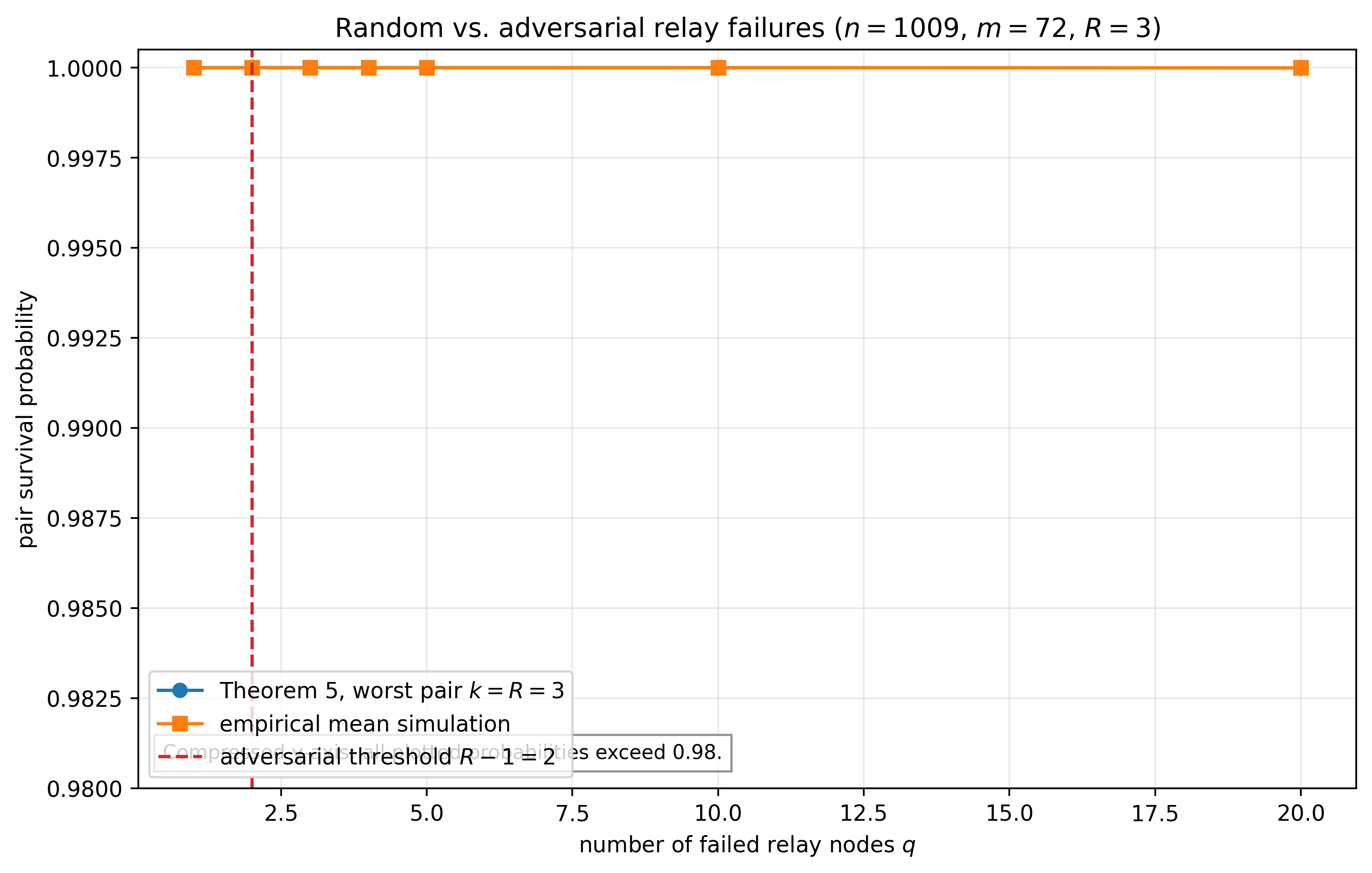}
\caption{Random versus adversarial relay failures for a representative $n=1009$ greedy design. The vertical line marks the deterministic adversarial threshold $R(S)-1$. Random-failure survival remains high beyond this threshold because random failures are unlikely to target exactly all relays of a worst-case pair. The y-axis is intentionally compressed to distinguish probabilities that are all above $0.98$.}
\label{fig:random-vs-worst}
\end{figure}

\section{Relay Tables and Selection Algorithms}
\label{sec:algorithms}

For each nonzero offset $d$, define the relay-offset table
\[
P_d=\{(a,b)\in S^2:b-a=d\}.
\]
By Theorem~\ref{thm:equiv}, if $d=v-u$ and $(a,b)\in P_d$, then the corresponding shared relay is
\[
r=u-a=v-b.
\]
Thus a single table indexed by offsets serves every terminal pair by translation.

\begin{proposition}[Table size]
\label{prop:table-size}
The total number of stored relay-offset entries over all nonzero offsets is
\[
\sum_{d\ne0}|P_d|=m(m-1).
\]
\end{proposition}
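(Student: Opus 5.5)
The plan is to show that the tables $P_d$ for $d\ne0$ partition the set of off-diagonal ordered pairs in $S^2$, and then count that set. The argument is the same double-counting already used in the proof of Theorem~\ref{thm:count-upper}, now read as a statement about stored entries rather than about the sum of multiplicities.

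\textbf{Step 1: the tables are disjoint.} Each ordered pair $(a,b)\in S^2$ determines a unique residue $d=b-a\in\Zn$. Hence it belongs to exactly one table $P_d$, and the tables $P_d$ for distinct $d$ are pairwise disjoint.

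\textbf{Step 2: the nonzero tables cover exactly the off-diagonal pairs.} The pair $(a,b)$ lies in $P_0$ exactly when $a=b$, because $S$ is a set of residues. Therefore
\[
\bigcup_{d\ne0}P_d=\{(a,b)\in S^2:a\ne b\},
\]
and this union is disjoint by Step 1.

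\textbf{Step 3: count.} The set of ordered pairs of distinct elements of an $m$-element set has $m(m-1)$ elements. It follows that $\sum_{d\ne0}|P_d|=m(m-1)$. Equivalently, $|P_d|=\lambda_S(d)$ by definition, so the identity restates $\sum_{d\ne0}\lambda_S(d)=m(m-1)$ from the proof of Theorem~\ref{thm:count-upper}.

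There is no real obstacle. The only point needing care is the diagonal: the pairs $(a,a)$ must be excluded, and they are precisely the $m$ entries of $P_0$, which the table omits. The relation $|P_d|=\lambda_S(d)$ also confirms, via Theorem~\ref{thm:equiv}, that each table stores exactly one entry per shared relay of the corresponding offset.
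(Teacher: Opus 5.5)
Your proof is correct and matches the paper's argument. Each off-diagonal ordered pair $(a,b)\in S^2$ falls into exactly one table $P_d$, namely the one with $d=b-a\ne0$; diagonal pairs occur only at offset zero; and there are $m(m-1)$ off-diagonal pairs, with your explicit partition and the identity $|P_d|=\lambda_S(d)$ simply spelling out what the paper's two-line proof leaves implicit.
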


\begin{proof}
Each ordered pair $(a,b)\in S^2$ with $a\ne b$ contributes to exactly one nonzero offset $d=b-a$. There are $m(m-1)$ such ordered pairs.
\end{proof}

\begin{corollary}[Relay-table preprocessing time]
\label{cor:preprocess}
All tables $P_d$ can be constructed in $O(m^2)$ time by enumerating the ordered pairs $(a,b)\in S^2$ and inserting each pair with $a\ne b$ into the table indexed by $d=b-a$.
\end{corollary}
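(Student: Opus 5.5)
The plan is a direct construction followed by a cost count, with Proposition~\ref{prop:table-size} supplying the size bound. I would allocate an array of $n$ buckets, indexed by offsets $d\in\Zn$, each holding an initially empty list. This costs $O(n)$. If one wants the bound to be exactly $O(m^2)$, one can use a hash map keyed by $d$, or note that only the nonzero offsets actually hit need buckets. Next, I would run a double loop over $a\in S$ and $b\in S$. For each ordered pair with $a\ne b$, I compute $d=b-a \bmod n$ in constant time and append $(a,b)$ to the list for $d$. Diagonal pairs are skipped, because they only feed $d=0$, which plays no role in fault tolerance.

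For correctness, I would argue that after the loop the list at index $d\ne0$ is exactly $P_d$. Every pair in $P_d$ satisfies $a\ne b$ and $b-a=d$, so it was appended to bucket $d$. Conversely, each appended pair went to the bucket of its own difference, so no bucket receives a pair that does not belong to it. Through Theorem~\ref{thm:equiv}, these entries then correspond to the shared relays.

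For the running time, the loop performs $m^2$ iterations, each doing constant work (one subtraction, one reduction, one append). By Proposition~\ref{prop:table-size}, the total output size is $m(m-1)$, so the time is $O(m^2)$, apart from bucket initialization.

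The main subtlety is this initialization term. A naive array of $n$ buckets adds $O(n)$, and $n$ can exceed $m^2$ when the design is not fault-tolerant. I would handle it by stating the bound as $O(m^2)$ given an already-allocated offset index, or by using hashing for expected $O(m^2)$. For any $0$-relay-fault-tolerant design, Corollary~\ref{cor:degree-lb} gives $m(m-1)\ge n-1$, so the $O(n)$ term is absorbed into $O(m^2)$ anyway. I would state this last observation explicitly, since it is the cleanest way to close the gap.
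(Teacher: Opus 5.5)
Your proposal is correct and follows the same route as the paper. The paper gives no separate argument beyond enumerating the $m^2$ ordered pairs with $O(1)$ work each and citing Proposition~\ref{prop:table-size} for the $m(m-1)$ stored entries. Your extra care about the $O(n)$ bucket-initialization term is a genuine refinement that the paper leaves implicit (compare the $O(m^2+n)$ bound in Proposition~\ref{prop:certificates}), and your observation that $m(m-1)\ge n-1$ for any $0$-relay-fault-tolerant design absorbs that term cleanly.
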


\begin{algorithm}[H]
\caption{Deterministic surviving-relay lookup}
\label{alg:lookup}
\begin{algorithmic}[1]
\REQUIRE $n,S$, precomputed tables $P_d$, terminals $u\ne v$, failed-node bitmap $F$
\STATE $d\leftarrow (v-u)\bmod n$
\FORALL{$(a,b)\in P_d$ in deterministic order}
    \STATE $r\leftarrow (u-a)\bmod n$
    \IF{$F[r]=0$}
        \RETURN $r$
    \ENDIF
\ENDFOR
\RETURN failure
\end{algorithmic}
\end{algorithm}

\begin{theorem}[Correctness of Algorithm~\ref{alg:lookup}]
If $|F|\le f$ and $R(S)\ge f+1$, then Algorithm~\ref{alg:lookup} returns a valid surviving relay for every $u\ne v$.
\end{theorem}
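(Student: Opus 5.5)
The proof has two parts: first show that every node the algorithm can return is a genuine shared relay that has not failed, and then show that the loop cannot run out of table entries. Throughout, fix distinct terminals $u\ne v$ and put $d=(v-u)\bmod n$, which is nonzero. The algorithm scans exactly the entries of $P_d$, and by the definition $P_d=\{(a,b)\in S^2:b-a=d\}$ we have $|P_d|=\lambda_S(d)$.

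\emph{Soundness.} Suppose the algorithm returns $r$. Then $r=u-a$ for some $(a,b)\in P_d$ with $F[r]=0$. By the bijection in Theorem~\ref{thm:equiv}, $r=u-a=v-b$ and $r\in\RS(u,v)$. Directly, $u-r=a\in S$ and $v-r=v-u+a=b\in S$, so $r\to u$ and $r\to v$. Since $F[r]=0$, we also have $r\notin F$. Hence every returned node lies in $\RS(u,v)\setminus F$.

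\emph{Completeness.} It remains to show that the loop returns before reaching the final failure statement. The plan is as follows.
\begin{itemize}
\item Consider the map $(a,b)\mapsto u-a$ restricted to $P_d$. By Theorem~\ref{thm:equiv} it is injective; indeed $a$ determines $b=a+d$. Its image is exactly $\RS(u,v)$.
\item The candidate relays examined by the loop are therefore $\lambda_S(d)$ pairwise distinct nodes.
\item Because $d\ne0$, the definition of $R(S)$ gives $\lambda_S(d)\ge R(S)\ge f+1$.
\item At most $f$ of these distinct candidates can lie in $F$, since $|F|\le f$.
\item Consequently at least one scanned entry has $F[r]=0$, and the algorithm returns at the first such entry.
\end{itemize}
This is the same pigeonhole argument as in Theorem~\ref{thm:worst}, now tied to the concrete scan order of Algorithm~\ref{alg:lookup}.

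There is no real obstacle. The only point needing care is that distinct table entries yield distinct relays, so that a failed node cannot ``block'' more than one entry. This is exactly the injectivity part of Theorem~\ref{thm:equiv}, and I would state it explicitly. I would also note that the loop is finite, since $|P_d|=\lambda_S(d)\le m^2$, and that the deterministic scan order plays no role in correctness.
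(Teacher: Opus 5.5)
Your proposal is correct and follows essentially the same argument as the paper: $|P_d|=\lambda_S(d)\ge R(S)\ge f+1$, at most $f$ of the corresponding relays lie in $F$, and any returned $r=u-a$ satisfies $u-r=a\in S$ and $v-r=b\in S$. Your explicit injectivity step (distinct table entries give distinct relays) spells out what the paper uses implicitly through the bijection of Theorem~\ref{thm:equiv}.
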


\begin{proof}
For $d=v-u$, the table $P_d$ contains exactly the difference representations corresponding to shared relays of $(u,v)$ by Theorem~\ref{thm:equiv}. Since $R(S)\ge f+1$, $|P_d|\ge f+1$. At most $f$ of the corresponding relays can belong to $F$, so at least one table entry produces a relay $r\notin F$. When Algorithm~\ref{alg:lookup} returns such an $r$, the associated $(a,b)$ satisfies $u-r=a\in S$ and $v-r=b\in S$, so $r\to u$ and $r\to v$.
\end{proof}

\begin{proposition}[Lookup cost versus graph search]
\label{prop:lookup-cost}
For terminal offset $d=v-u$, Algorithm~\ref{alg:lookup} inspects at most $\lambda_S(d)$ table entries. Hence its worst-case online scan length is at most
\[
\max_{d\ne0}\lambda_S(d)\le m,
\]
and, under a traffic model in which the terminal-pair offset $d$ is uniformly distributed over the nonzero offsets, its average scan length is
\[
\frac{1}{n-1}\sum_{d\ne0}\lambda_S(d)=\frac{m(m-1)}{n-1}.
\]
By contrast, a standard breadth-first rerouting search in a directed degree-$m$ circulant can inspect $\Theta(nm)$ edges in the worst case.
\end{proposition}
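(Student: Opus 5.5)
The plan is to bound the scan length by the size of the table being scanned, and then bound that size by counting; the comparison with breadth-first search is a separate, standard cost count.

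First, the scan length. Algorithm~\ref{alg:lookup} computes $d=v-u$ and iterates only over $P_d$, so it inspects at most $|P_d|$ entries. By definition $|P_d|=\lambda_S(d)$, which is also $|\RS(u,v)|$ by Theorem~\ref{thm:equiv}. Taking the worst case over nonzero offsets gives the bound $\max_{d\ne0}\lambda_S(d)$.

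Second, the bound by $m$. Fix $d\ne0$. For each $a\in S$ there is at most one $b$ with $b-a=d$, namely $b=a+d$, and it counts only if $a+d\in S$. Hence $\lambda_S(d)\le |S|=m$. This injection from $P_d$ into $S$ via the first coordinate is the only step that needs an actual argument, and it is immediate.

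Third, the average. Under uniform offsets the expected scan length is $\frac{1}{n-1}\sum_{d\ne0}\lambda_S(d)$. Proposition~\ref{prop:table-size}, or the identity in the proof of Theorem~\ref{thm:count-upper}, gives $\sum_{d\ne0}\lambda_S(d)=m(m-1)$. Dividing by $n-1$ yields $\frac{m(m-1)}{n-1}$.

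Fourth, the comparison with breadth-first search. I would note that BFS on $n$ vertices with out-degree $m$ scans each out-edge of every dequeued vertex. When the target is reached late, it can dequeue $\Theta(n)$ vertices, giving $\Theta(nm)$ edge inspections. A concrete worst case is an interval circulant, whose diameter is about $n/m$, so the search reaches a constant fraction of the nodes before finding a distant target. Since the statement only says BFS "can inspect", exhibiting this upper bound together with one such instance suffices.

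The main obstacle is purely interpretive: the BFS clause is informal, and I would treat it as the standard $O(|V|+|E|)$ bound plus a witness family, rather than as a lower bound for every search strategy.
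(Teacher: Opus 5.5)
Your argument is essentially the paper's: the scan length is $|P_d|=\lambda_S(d)$, the injection $P_d\to S$ via $(a,b)\mapsto a$ gives $\lambda_S(d)\le m$, and the average is Proposition~\ref{prop:table-size} divided by $n-1$. The paper's proof gives no argument for the breadth-first-search clause, so your interval-circulant witness is an addition rather than a departure. Note that it yields $\Theta(nm)$ only when $m\le(1-\epsilon)n$: exactly $n-m$ vertices are dequeued before the target at offset $n-1$ is found. That restriction matches the paper's intended regime of small $m$.
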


\begin{proof}
For a fixed offset $d$, the algorithm scans the entries of $P_d$, whose cardinality is $\lambda_S(d)$. For each $a\in S$ there is at most one $b=a+d$ in $S$, so $\lambda_S(d)\le m$. The average identity is Proposition~\ref{prop:table-size} divided by $n-1$ under the stated uniform-offset assumption.
\end{proof}

\begin{algorithm}[H]
\caption{Load-aware surviving-relay selection}
\label{alg:load}
\begin{algorithmic}[1]
\REQUIRE $n,S$, tables $P_d$, terminals $u\ne v$, failed bitmap $F$, relay load vector $L$
\STATE $d\leftarrow (v-u)\bmod n$; $best\leftarrow\bot$
\FORALL{$(a,b)\in P_d$}
    \STATE $r\leftarrow (u-a)\bmod n$
    \IF{$F[r]=0$ and ($best=\bot$ or $L[r]<L[best]$)}
        \STATE $best\leftarrow r$
    \ENDIF
\ENDFOR
\RETURN $best$
\end{algorithmic}
\end{algorithm}

\begin{theorem}[Load-aware correctness]
Under the same hypothesis as Theorem~\ref{thm:worst}, Algorithm~\ref{alg:load} returns a valid surviving relay. Among surviving relays listed in $P_{v-u}$, it returns one with minimum recorded load.
\end{theorem}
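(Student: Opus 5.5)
The argument reuses the surviving-relay reasoning from the correctness proof of Algorithm~\ref{alg:lookup} and adds a loop-invariant argument for the minimum-load choice. Fix $u\ne v$ and set $d=v-u\ne0$. By Theorem~\ref{thm:equiv}, the map $(a,b)\mapsto r=u-a$ is a bijection from $P_d$ onto $\RS(u,v)$. In particular, every candidate $r$ computed in the loop satisfies $u-r=a\in S$ and $v-r=b\in S$, so each candidate is a genuine shared relay. Let $C=\{u-a:(a,b)\in P_d\}\setminus F$ be the set of surviving candidates.

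\textbf{Nonemptiness.} Under the hypothesis of Theorem~\ref{thm:worst}, $|P_d|=\lambda_S(d)\ge R(S)\ge f+1$ and $|F|\le f$. Hence $C\neq\emptyset$.

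\textbf{Loop invariant.} After processing any prefix of $P_d$, either $best=\bot$ and no surviving candidate has appeared yet, or $best$ is a surviving candidate from the prefix whose load $L[best]$ is minimal among the surviving candidates seen so far. The invariant holds trivially before the loop. To maintain it, consider the next candidate $r$. If $F[r]=1$, it is skipped and the invariant is unchanged. If $F[r]=0$, the update rule replaces $best$ exactly when $best=\bot$ or $L[r]<L[best]$, so the new minimum over the extended prefix is recorded correctly. Ties keep the earlier candidate, which is still a minimizer.

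\textbf{Termination.} At loop exit the prefix is all of $P_d$. Since $C\neq\emptyset$, some surviving candidate was processed, so $best\ne\bot$. Therefore $best\in C$, it is a valid surviving relay, and it has minimum $L$-value over $C$, which is the set of surviving relays listed in $P_{v-u}$.

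There is no real obstacle here. The only care needed is to state the invariant precisely, in particular the $best=\bot$ case and tie-breaking, and to note that the minimality claim is relative to the recorded vector $L$, not to any true traffic.
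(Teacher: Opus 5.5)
Your proof is correct and follows the same route as the paper. Existence and validity come from the relay--difference bijection plus the counting argument used for Algorithm~\ref{alg:lookup}, and minimality comes from the full scan of $P_{v-u}$ with strict-improvement updates. Your explicit loop invariant, including the $best=\bot$ case and the handling of ties, just makes precise what the paper's brief argument asserts.
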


\begin{proof}
The existence and validity of at least one surviving relay follow as in the proof of Algorithm~\ref{alg:lookup}. Algorithm~\ref{alg:load} scans all surviving relays represented by $P_{v-u}$ and updates $best$ only when a strictly smaller load is found. Therefore the returned relay has minimum load among surviving candidates.
\end{proof}

\begin{proposition}[Translation-invariant relay balance]
\label{prop:round-robin-load}
Fix a nonzero offset $d$ and write $P_d=\{(a_1,b_1),\ldots,(a_k,b_k)\}$. For any fixed representation $j$, if each of the $n$ ordered terminal pairs $(u,u+d)$, $u\in\Zn$, issues exactly one relay request and all requests use representation $j$, then every relay node is selected exactly once. More generally, suppose a policy assigns, for each representation $j$, exactly $t_j$ complete translation sweeps over all $n$ ordered terminal pairs with offset $d$. Then every relay node is selected exactly $\sum_{j=1}^k t_j$ times.
\begin{proof}
For representation $(a_j,b_j)$ and terminal $u$, the selected relay is $r=u-a_j$. As $u$ ranges over $\Zn$, the map $u\mapsto u-a_j$ is a bijection of $\Zn$. Hence one full sweep over the $n$ terminal pairs with representation $j$ selects each relay exactly once. Repeating this sweep $t_j$ times contributes exactly $t_j$ selections to every relay. Summing over the representations gives exactly $\sum_{j=1}^k t_j$ selections at every relay node.
\end{proof}
\end{proposition}

The hypothesis of Proposition~\ref{prop:round-robin-load} is intentionally scoped to uniform complete sweeps over a fixed offset. Non-uniform traffic, arbitrary streaming request order, or repeated demand for a small subset of terminal pairs can concentrate load on specific relays; in that case Algorithm~\ref{alg:load} should be combined with measured load counters or traffic-aware ordering of $P_d$ rather than relying on symmetry alone.

\begin{example}[Algorithm trace]
Let $n=13$, $S=\{1,4,6,9\}$, $u=0$, and $v=5$. Then $d=5$ and
\[
P_5=\{(1,6),(4,9),(9,1)\}.
\]
The corresponding relays are $12$, $9$, and $4$. If $F=\{12\}$, Algorithm~\ref{alg:lookup} skips $12$ and may return $9$. Indeed,
\[
0-9\equiv4\in S,\qquad 5-9\equiv9\in S,
\]
so $9\to0$ and $9\to5$.
\end{example}

\begin{figure}[H]
\centering
\includegraphics[width=0.75\linewidth]{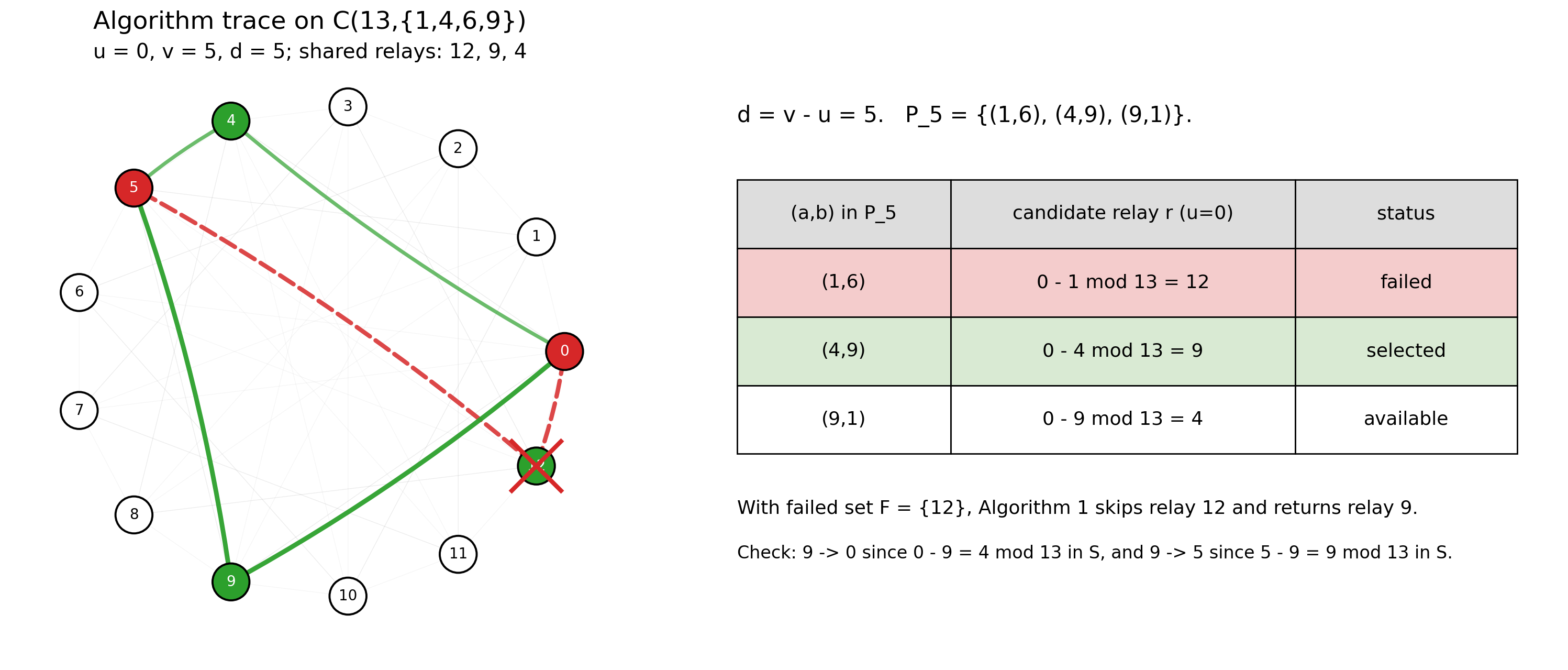}
\caption{Worked trace of Algorithm~\ref{alg:lookup}. The table $P_5$ produces three candidate relays for $u=0$; the arithmetic in the relay column substitutes $u=0$ explicitly. The failed relay is crossed out and the first surviving relay is selected.}
\label{fig:alg-trace}
\end{figure}

\section{Vectorized Greedy Generator Construction}
\label{sec:greedy}

The experiments use several generator families. The strongest threshold designs were produced by a vectorized greedy heuristic. The heuristic is not a proof of optimality; after each set is constructed, all multiplicities are recomputed exactly by the definition of $\lambda_S$.

Suppose the current partial generator set is $S$ with multiplicities $\lambda(d)$. If a new candidate $x\notin S\cup\{0\}$ is added, then for every nonzero $d$ the increment is
\[
\Delta_x(d)=\one_S(x-d)+\one_S(x+d),
\]
because the new ordered pairs are $(a,x)$ and $(x,a)$ for $a\in S$. The greedy rule prioritizes candidates that cover the largest number of currently minimum-multiplicity offsets; ties are broken by coverage of currently zero offsets and then deterministically by the seeded ordering. A chunked vectorized implementation evaluates many candidates at once using Boolean indicator arrays.

\begin{algorithm}[H]
\caption{Vectorized greedy generator heuristic}
\label{alg:greedy}
\begin{algorithmic}[1]
\REQUIRE $n$, target degree $m$, random seed
\STATE $S\leftarrow\emptyset$, $\lambda(d)\leftarrow0$ for all $d$
\STATE $A\leftarrow\Zn\setminus\{0\}$
\WHILE{$|S|<m$}
    \STATE $M\leftarrow\min_{d\ne0}\lambda(d)$
    \STATE $D_M\leftarrow\{d\ne0:\lambda(d)=M\}$
    \STATE $D_0\leftarrow\{d\ne0:\lambda(d)=0\}$
    \FORALL{chunks $X\subseteq A$}
        \STATE evaluate $c_M(x)=|\{d\in D_M:\Delta_x(d)>0\}|$ for all $x\in X$
        \STATE evaluate $c_0(x)=|\{d\in D_0:\Delta_x(d)>0\}|$ for all $x\in X$
    \ENDFOR
    \STATE choose $x$ maximizing $(c_M(x),c_0(x))$ with seeded tie-breaking
    \STATE $S_{\mathrm{old}}\leftarrow S$
    \STATE $S\leftarrow S_{\mathrm{old}}\cup\{x\}$; $A\leftarrow A\setminus\{x\}$
    \STATE update $\lambda(x-a)$ and $\lambda(a-x)$ for every $a\in S_{\mathrm{old}}$
\ENDWHILE
\RETURN $S$
\end{algorithmic}
\end{algorithm}

\begin{proposition}[Greedy update formula]
For a candidate $x\notin S$ and a nonzero offset $d$, adding $x$ changes the multiplicity by
\[
\Delta_x(d)=\one_S(x-d)+\one_S(x+d).
\]
\end{proposition}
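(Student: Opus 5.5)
The plan is to compare the two difference-representation sets directly, $\{(a,b)\in S^2:b-a=d\}$ and $\{(a,b)\in (S\cup\{x\})^2:b-a=d\}$, and count exactly what is new. Every ordered pair in $(S\cup\{x\})^2$ lies in exactly one of four classes: pairs with both entries in $S$, pairs $(a,x)$ with $a\in S$, pairs $(x,b)$ with $b\in S$, and the diagonal pair $(x,x)$. The first class reproduces exactly the old count $\lambda_S(d)$. So $\Delta_x(d)$ is the number of representations of $d$ in the remaining three classes.

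\textbf{Step 1: the diagonal pair.} The pair $(x,x)$ has difference $0$, so it never contributes when $d\ne0$.

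\textbf{Step 2: pairs of the form $(a,x)$ with $a\in S$.} The condition $x-a=d$ forces $a=x-d$. Hence there is at most one such pair, and it exists exactly when $x-d\in S$. Its contribution is therefore $\one_S(x-d)$.

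\textbf{Step 3: pairs of the form $(x,b)$ with $b\in S$.} The condition $b-x=d$ forces $b=x+d$, which contributes $\one_S(x+d)$.

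\textbf{Step 4: no double counting.} The classes in Steps 2 and 3 are disjoint from each other because $x\notin S$. A pair $(a,x)$ has second coordinate $x\notin S$, while a pair $(x,b)$ has second coordinate $b\in S$, so no pair lies in both. Adding the contributions gives $\Delta_x(d)=\one_S(x-d)+\one_S(x+d)$. This is consistent with the update step $\lambda(x-a),\lambda(a-x)$ in Algorithm~\ref{alg:greedy}.

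The only point needing care is Step 4: confirming that the two indicators can both equal $1$ legitimately. This happens, for example, when $x-d$ and $x+d$ are both in $S$, or in the degenerate case $2d=0$ with $x+d=x-d\in S$. In that degenerate case the pairs $(x+d,x)$ and $(x,x+d)$ are still distinct ordered pairs, so counting $2$ is correct. Beyond this check, the argument is routine.
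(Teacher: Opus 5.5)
Your proof is correct and is essentially the paper's argument: both isolate the new ordered pairs $(a,x)$, $(x,a)$ and $(x,x)$, and note that each of the first two types contributes to $d$ only through the single element $a=x-d$ or $a=x+d$. Your explicit disjointness check and your remark on the case $2d=0$ add care the paper leaves implicit, and the paper's extra note that $\one_S(0)=0$ under the convention $0\notin S$ is not needed for the formula itself.
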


\begin{proof}
The only new ordered pairs involving $x$ are $(a,x)$ and $(x,a)$ for $a\in S$, plus $(x,x)$ which contributes only to offset zero. A pair $(a,x)$ contributes to nonzero offset $d$ exactly when $x-a=d$, equivalently $a=x-d\in S$. A pair $(x,a)$ contributes to $d$ exactly when $a-x=d$, equivalently $a=x+d\in S$. If $x-d\equiv0$ or $x+d\equiv0$, the corresponding indicator is zero under the standing network convention $0\notin S$; hence no self-loop generator is silently introduced. Summing the two indicators gives the formula.
\end{proof}

The vectorized greedy does not enumerate a witness proof of optimality. It is a construction heuristic whose output is validated exactly. Fig.~\ref{fig:greedy-conv} shows a typical convergence trace.

\begin{figure}[H]
\centering
\includegraphics[width=0.6\linewidth]{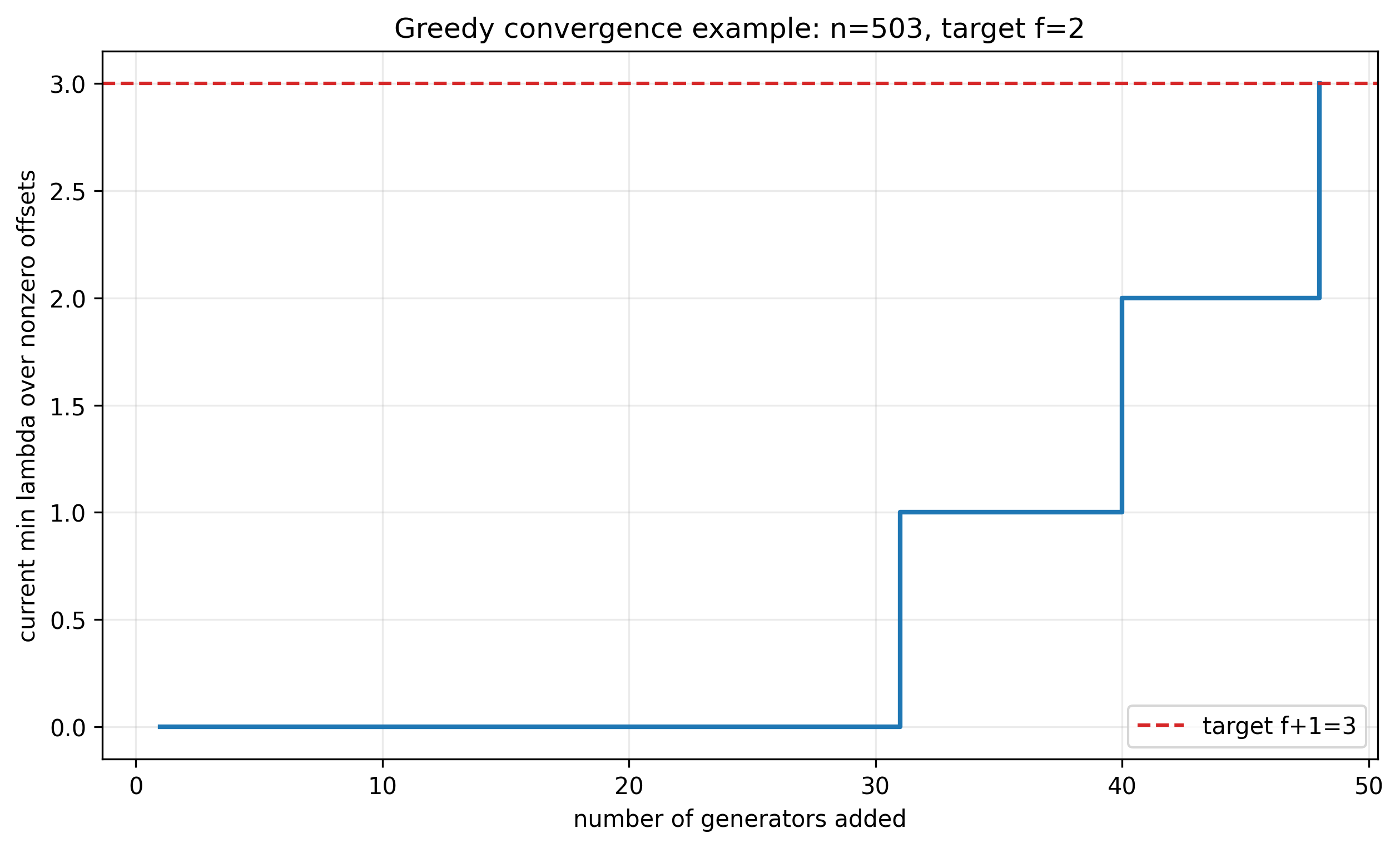}
\caption{Greedy convergence example. The curve shows the current minimum nonzero multiplicity as generators are added for $n=503$. The target $f+1=3$ is reached at the degree reported in Table~\ref{tab:thresholds}.}
\label{fig:greedy-conv}
\end{figure}

\section{Experimental Design}
\label{sec:experiments}

The optimized experiment evaluated $526{,}539$ circulant generator sets and wrote checkpointed CSV files after each batch. The tested sizes were
\[
n\in\{251,503,1009,2003,5003,10007\}
\]
and the fault levels were $f=0,1,2,3,4,5$. The tested families were interval, symmetric interval, modular stride, quadratic residues, random, and greedy max-min. Random sets were sampled with 1000 trials per tested degree. Greedy was run once per tested degree and was restricted to the threshold-relevant degree range. All reported values of $R(S)$, $\lambda$ averages, and standard deviations were recomputed exactly from the final generator sets. The optimized full run completed in $3995.020$ seconds and produced $1218$ recovery rows.

\subsection{Empirical Thresholds}

Table~\ref{tab:thresholds} reports the best degree found for each $(n,f)$. These values are empirical best-found degrees among the tested families, not exact values of $D_f(n)$.

\begin{table}[H]
\centering
\caption{Best found degrees for $f$-relay-fault tolerance. LB is the counting lower bound of Corollary~\ref{cor:degree-lb}. The ratio is $m/\!\sqrt{(f+1)n}$.}
\label{tab:thresholds}
\begin{adjustbox}{max width=\textwidth}
\begin{tabular}{rrrrrrr}
\toprule
$n$ & $f$ & LB & best $m$ & gap & $m/\sqrt{(f+1)n}$ & family \\
\midrule
251&0&17&20&3&1.262&greedy\\
251&1&23&28&5&1.250&greedy\\
251&2&28&33&5&1.203&greedy\\
251&3&33&38&5&1.199&greedy\\
251&4&36&41&5&1.157&greedy\\
251&5&40&45&5&1.160&greedy\\
503&0&23&31&8&1.382&greedy\\
503&1&33&40&7&1.261&greedy\\
503&2&40&48&8&1.236&greedy\\
503&3&46&54&8&1.204&greedy\\
503&4&51&59&8&1.176&greedy\\
503&5&56&65&9&1.183&greedy\\
1009&0&33&45&12&1.417&greedy\\
1009&1&46&59&13&1.313&greedy\\
1009&2&56&72&16&1.309&greedy\\
1009&3&64&78&14&1.228&greedy\\
1009&4&72&86&14&1.211&greedy\\
1009&5&79&92&13&1.182&greedy\\
2003&0&46&66&20&1.475&greedy\\
2003&1&64&86&22&1.359&greedy\\
2003&2&78&101&23&1.303&greedy\\
2003&3&90&114&24&1.274&greedy\\
2003&4&101&124&23&1.239&greedy\\
2003&5&111&135&24&1.231&greedy\\
5003&0&72&114&42&1.612&greedy\\
5003&1&101&142&41&1.420&greedy\\
5003&2&123&166&43&1.355&greedy\\
5003&3&142&184&42&1.301&greedy\\
5003&4&159&201&42&1.271&greedy\\
5003&5&174&217&43&1.252&greedy\\
10007&0&101&163&62&1.629&greedy\\
10007&1&142&209&67&1.477&greedy\\
10007&2&174&246&72&1.420&greedy\\
10007&3&201&271&70&1.355&greedy\\
10007&4&225&301&76&1.346&greedy\\
10007&5&246&319&73&1.302&greedy\\
\bottomrule
\end{tabular}
\end{adjustbox}
\end{table}

\begin{table}[H]
\centering
\caption{Exact $D_f(n)$ values and greedy calibration for small cyclic groups. The greedy column is the first degree at which the same seeded vectorized greedy pipeline reaches $R(S)\ge f+1$; the gap is greedy $m-D_f(n)$.}
\label{tab:exact-small}
\begin{adjustbox}{max width=\textwidth}
\begin{tabular}{rrrrrrrrrr}
\toprule
$n$ & $f$ & exact $D_f(n)$ & greedy $m$ & gap &
$n$ & $f$ & exact $D_f(n)$ & greedy $m$ & gap \\
\midrule
7&0&3&3&0 & 17&0&5&5&0\\
7&1&4&4&0 & 17&1&7&7&0\\
7&2&5&5&0 & 17&2&8&8&0\\
7&3&6&6&0 & 17&3&9&9&0\\
11&0&4&4&0 & 19&0&5&5&0\\
11&1&5&6&1 & 19&1&7&7&0\\
11&2&6&7&1 & 19&2&8&9&1\\
11&3&7&7&0 & 19&3&9&10&1\\
13&0&4&4&0 & 23&0&6&6&0\\
13&1&6&6&0 & 23&1&8&8&0\\
13&2&7&7&0 & 23&2&9&10&1\\
13&3&8&8&0 & 23&3&11&11&0\\
\bottomrule
\end{tabular}
\end{adjustbox}
\end{table}

\begin{table}[H]
\centering
\caption{Composite-size spot check using the same greedy construction. The ratios remain within the same practical range as the prime-size experiment, suggesting that the construction pipeline is not restricted to prime $n$.}
\label{tab:composite}
\begin{tabular}{rrrrr}
\toprule
$n$ & $f$ & LB & greedy $m$ & ratio \\
\midrule
256&0&17&21&1.313\\
256&2&29&34&1.227\\
256&5&40&45&1.148\\
512&0&24&31&1.370\\
512&2&40&48&1.225\\
512&5&56&65&1.173\\
1024&0&33&46&1.438\\
1024&2&56&70&1.263\\
1024&5&79&93&1.186\\
\bottomrule
\end{tabular}
\end{table}

The ratios are not monotone in $n$ for fixed $f$, and for $f=0$ they grow from $1.26$ to $1.63$ across the tested range. This should not be interpreted as a theorem about $D_f(n)$; it is a property of the tested heuristic and degree grid.

\begin{figure}[H]
\centering
\includegraphics[width=0.6\linewidth]{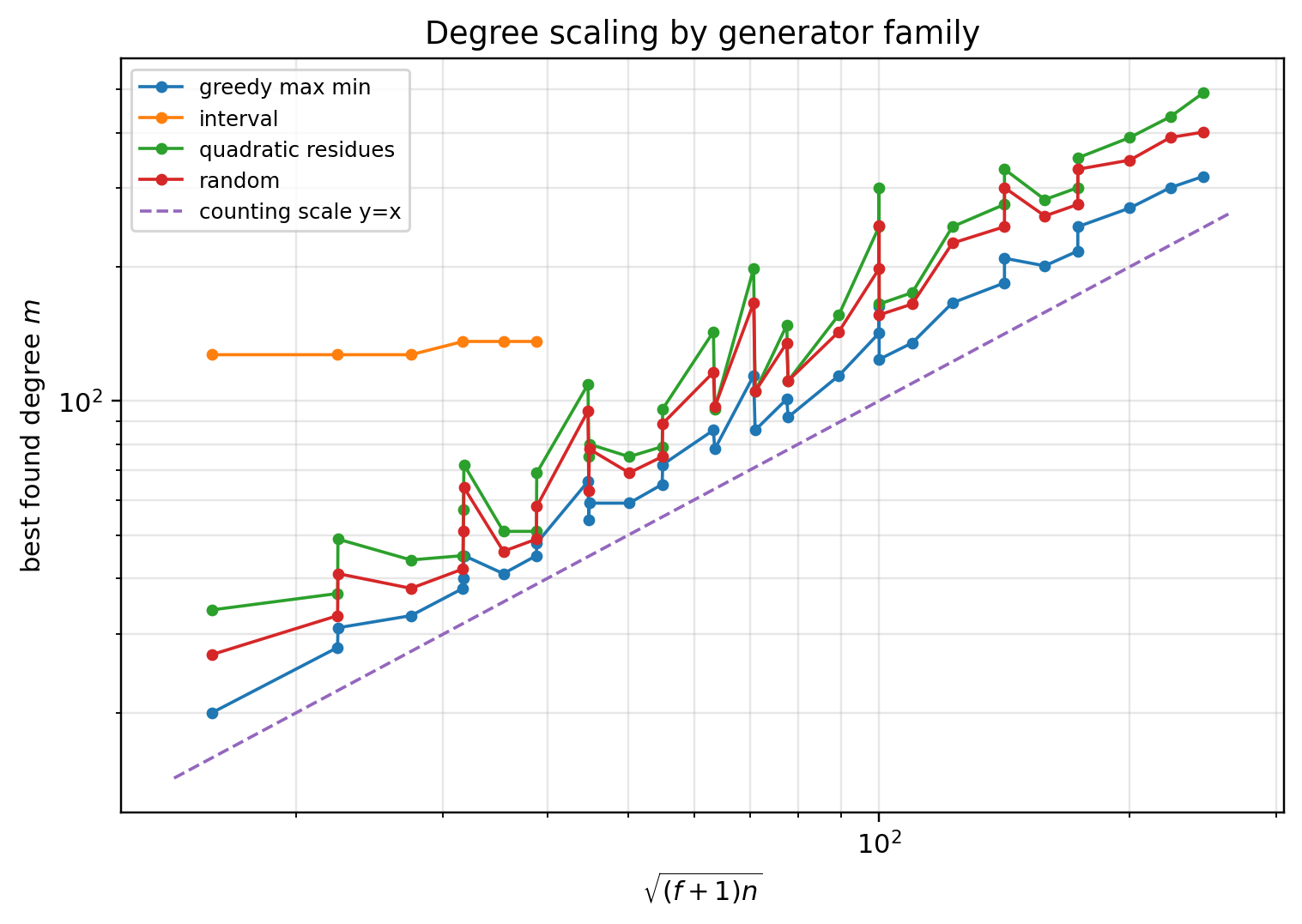}
\caption{Degree scaling against the natural lower-bound scale $\sqrt{(f+1)n}$. Greedy sets track the lower-bound scale more closely than interval sets, while quadratic-residue designs sometimes outperform greedy at larger degrees. The horizontal axis is shown on a logarithmic scale to keep the weak interval-family points visible. The diagonal is not a theorem of achievability; it is the counting scale.}
\label{fig:scaling}
\end{figure}

\subsection{High-Redundancy Distributions}

Table~\ref{tab:high} reports representative high-redundancy designs. Average multiplicity can be much larger than the worst-case multiplicity $R(S)$; both are relevant. The worst case gives adversarial guarantees, while the distribution shapes explain random-failure behavior.

\begin{table}[H]
\centering
\caption{Representative high-redundancy designs from the full run.}
\label{tab:high}
\begin{tabular}{rrrrr}
\toprule
$n$ & $m$ & family & $R(S)$ & mean $\lambda$ \\
\midrule
251 & 156 & random & 92 & 96.72\\
503 & 220 & quadratic residues & 89 & 95.98\\
1009 & 312 & quadratic residues & 83 & 96.26\\
2003 & 439 & random & 77 & 96.04\\
5003 & 694 & random & 73 & 96.15\\
10007 & 981 & random & 71 & 96.08\\
\bottomrule
\end{tabular}
\end{table}

\begin{figure}[H]
\centering
\includegraphics[width=0.6\linewidth]{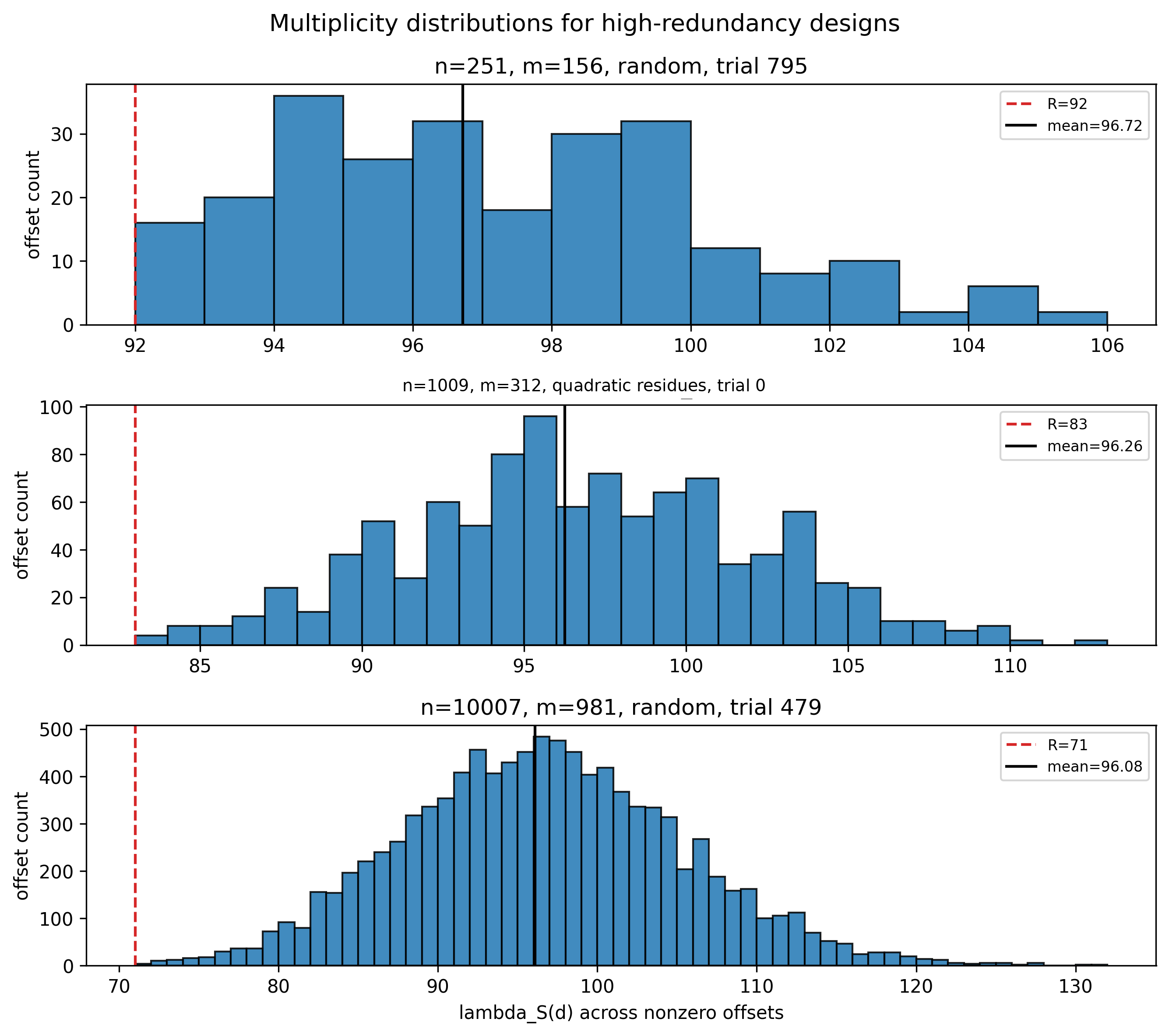}
\caption{Multiplicity distributions for high-redundancy designs. The gap between the mean and $R(S)$ shows why average redundancy and worst-case redundancy must be reported separately. Each panel uses its own horizontal scale to show distribution shape.}
\label{fig:hist}
\end{figure}

\subsection{Recovery Simulation}

For the selected best designs, random relay failures were simulated over $5000$ trials per row. Table~\ref{tab:recovery} aggregates the survival probabilities by number of failed relays. The minimum is the worst observed simulation row, not a theoretical lower bound.

\begin{table}[H]
\centering
\caption{Random relay-failure recovery over 1218 simulation rows. The minimum row for every $q$ came from the smallest threshold design $n=251$, $m=20$, greedy, $R=1$; thus the low $q=20$ minimum is a small-degree outlier rather than a systematic failure of high-redundancy designs.}
\label{tab:recovery}
\begin{tabular}{rrrr}
\toprule
failures $q$ & min success & mean success & median \\
\midrule
1 & 0.9984 & 0.999936 & 1.0000\\
2 & 0.9942 & 0.999848 & 1.0000\\
3 & 0.9946 & 0.999803 & 1.0000\\
4 & 0.9908 & 0.999723 & 1.0000\\
5 & 0.9892 & 0.999682 & 1.0000\\
10 & 0.9792 & 0.999338 & 1.0000\\
20 & 0.9496 & 0.998581 & 1.0000\\
\bottomrule
\end{tabular}
\end{table}

\subsection{Greedy Runtime and Construction Cost}

Table~\ref{tab:greedy-runtime} gives empirical worker times for the vectorized greedy heuristic. The table also explains why greedy construction should be regarded as an offline design step rather than an online routing step. Online recovery uses only the precomputed tables of Section~\ref{sec:algorithms}.

\begin{table}[H]
\centering
\caption{Empirical vectorized-greedy construction cost. Worker time is summed over greedy designs for that $n$ and is not wall-clock time.}
\label{tab:greedy-runtime}
\begin{tabular}{rrrrr}
\toprule
$n$ & designs & max $m$ & max $R$ & max sec/design \\
\midrule
251 & 53 & 90 & 78 & 0.09\\
503 & 59 & 126 & 72 & 0.47\\
1009 & 75 & 176 & 86 & 2.45\\
2003 & 80 & 247 & 81 & 17.29\\
5003 & 87 & 390 & 78 & 182.48\\
10007 & 89 & 552 & 73 & 1042.51\\
\bottomrule
\end{tabular}
\end{table}

\begin{table}[H]
\centering
\caption{Seed sensitivity of the vectorized greedy heuristic on two threshold cases.}
\label{tab:seed-sensitivity}
\begin{tabular}{rrrrrr}
\toprule
$n$ & $m$ & target $f$ & seed & $R(S)$ & std. dev. \\
\midrule
251&33&2&0&3&1.179\\
251&33&2&1&3&1.080\\
251&33&2&2&3&1.245\\
251&33&2&3&3&1.332\\
251&33&2&4&3&1.270\\
1009&72&2&0&3&1.498\\
1009&72&2&1&3&1.444\\
1009&72&2&2&3&1.446\\
1009&72&2&3&3&1.550\\
1009&72&2&4&3&1.459\\
\bottomrule
\end{tabular}
\end{table}

\subsection{Comparison With Difference-Basis Benchmarks}

The counting bound is an unavoidable lower bound. Existing cyclic difference-basis theory gives general existence upper bounds for $f=0$, such as the $1.5\sqrt n$-scale bound in \cite{BanakhGavrylkiv2019}. For higher multiplicity, recent preprints study $g$-difference bases, but we treat those results as preprint-level context rather than as peer-reviewed cyclic construction constants. Table~\ref{tab:db-comparison} therefore separates three quantities: the counting lower bound, an external peer-reviewed $f=0$ cyclic difference-basis upper bound when applicable, and the empirical relay-design threshold found by the greedy pipeline.

\begin{table}[H]
\centering
\caption{Benchmark comparison. BG is the $f=0$ cyclic difference-basis construction upper bound from \cite{BanakhGavrylkiv2019}. For $f=1,2$, no peer-reviewed explicit cyclic $g$-difference-basis generator bound is used here; the benchmark column therefore reports the counting lower bound as the concrete reference point.}
\label{tab:db-comparison}
\begin{adjustbox}{max width=\textwidth}
\begin{tabular}{rrrrrrl}
\toprule
$n$ & $f$ & counting LB & external benchmark & greedy $m$ & ratio & interpretation \\
\midrule
251&0&17&BG $\le24$&20&1.26&greedy within BG scale\\
251&1&23&LB $23$&28&1.25&--\\
251&2&28&LB $28$&33&1.20&--\\
1009&0&33&BG $\le48$&45&1.42&greedy within BG scale\\
1009&1&46&LB $46$&59&1.313&--\\
1009&2&56&LB $56$&72&1.309&--\\
10007&0&101&BG $\le151$&163&1.63&algebraic benchmark smaller\\
10007&1&142&LB $142$&209&1.48&--\\
10007&2&174&LB $174$&246&1.42&--\\
\bottomrule
\end{tabular}
\end{adjustbox}
\end{table}

\section{Practical Interpretation for Interconnection Networks}
\label{sec:practical}

The shared-relay model is most appropriate when a system benefits from immediate two-hop alternatives rather than global rerouting. Examples include local rendezvous, acknowledgement aggregation, duplicated control signaling, and small-message recovery in symmetric overlays or structured on-chip fabrics. Ring interconnects in commercial processors are the degree-two circulant baseline; recent NoC work also implements and evaluates routing algorithms for circulant topologies \cite{RomanovAccess2020,MonakhovaTNSE2023}.

A concrete deployment is a duplicated control-notification primitive on a directed circulant overlay. When a controller or source selects a terminal pair $(u,v)$ that must both receive a short control word, it computes $d=v-u$ and scans $P_d$ to choose a surviving relay $r$. Because the model requires $r\to u$ and $r\to v$, the selected relay can forward the duplicated control word to both terminals. If $r$ is failed, the failed-node bitmap causes the endpoints or controller to skip the same table entry and select the next surviving relay.

By Corollary~\ref{cor:preprocess}, the preprocessing cost is $O(m^2)$. The memory cost is governed by Proposition~\ref{prop:table-size}. At the largest threshold design in Table~\ref{tab:thresholds}, $m=319$ and hence $m(m-1)=101{,}442$ offset-pair entries. Proposition~\ref{prop:lookup-cost} makes this latency distinction explicit: relay lookup is $O(\lambda_S(d))\le O(m)$ in the worst case and $m(m-1)/(n-1)$ on average under a uniform-offset traffic model, whereas a graph-search rerouting baseline on a degree-$m$ circulant may inspect $\Theta(nm)$ edges. Table~\ref{tab:latency} gives a concrete software measurement of this separation for four threshold designs.

\begin{table}[H]
\centering
\caption{Software recovery microbenchmark for greedy $f=2$ threshold designs with two random failed relays and $10{,}000$ random terminal pairs. The proposed lookup uses Algorithm~\ref{alg:lookup}; the graph-search baseline scans candidate relays in the circulant graph from scratch. Timings are Python wall-clock microseconds.}
\label{tab:latency}
\begin{tabular}{rrrrrr}
\toprule
$n$ & $m$ & lookup mean & lookup 99th & search mean & search 99th \\
\midrule
251&33&0.245&0.582&3.515&13.367\\
503&48&0.299&0.691&7.301&27.502\\
1009&72&0.297&0.516&14.309&58.344\\
2003&101&0.471&2.197&25.625&100.648\\
\bottomrule
\end{tabular}
\end{table}

The shared-relay primitive is most useful when $m$ is small relative to $n$, so that $P_d$ is sparse and the $O(m)$ lookup cost is substantially smaller than a graph-wide search; when $m=\Theta(n)$, most offsets are covered trivially and conventional routing may be sufficient. The limitation is that the model guarantees only the existence of a two-hop relay with outgoing links to both terminals. It does not by itself solve congestion control, global packet scheduling, or multi-hop routing after arbitrary topology damage. These are deliberately separated: the paper defines and optimizes the relay primitive, which can be embedded into larger routing protocols.

\section{Discussion}
\label{sec:discussion}

Shared-relay fault tolerance is equivalent to cyclic difference multiplicity, a known combinatorial condition. The equivalence becomes practically useful only after adding a network-layer interpretation: relay tables, failure semantics, selection rules, memory costs, balance guarantees, and empirical generator design. This separation protects the paper from rediscovering difference bases while giving the equivalence operational meaning for circulant interconnection networks.

The empirical results show three trends. First, simple interval-like generator sets are poor worst-case shared-relay designs because they leave many offsets uncovered. Second, vectorized greedy sets can reach $R(S)=f+1$ within a moderate constant factor of the counting lower bound for all tested $n$ and $f\le5$. Third, random failure performance is much stronger than adversarial guarantees, with mean survival above $0.998$ even at $20$ random failures and above $0.999$ for $q\le10$, but the adversarial guarantee remains the correct safety definition for worst-case relay failures.

Several limitations remain. First, the greedy construction is a heuristic and should not be interpreted as an exact computation of $D_f(n)$ outside the small cases exhaustively verified in Table~\ref{tab:exact-small}. Second, we prove membership of the design decision problem in NP, but we do not claim an NP-hardness theorem or a worst-case approximation ratio for Algorithm~\ref{alg:greedy}. Third, for $f\ge1$, no peer-reviewed explicit cyclic $g$-difference-basis construction is used here to provide a concrete numerical comparison. Fourth, the current systems evidence is a software-level recovery microbenchmark, not a full packet-level congestion or hardware-cycle simulation.

\section{Conclusion}

This paper developed a fault-tolerant shared-relay communication framework for directed circulant interconnection networks. The core equivalence to cyclic difference multiplicity was stated explicitly and used as a tool rather than claimed as a new combinatorial object. On top of that equivalence, the paper defined network design parameters $D_f(n)$ and $R(n,m)$, proved worst-case and random-failure recovery guarantees, proved a structural failure theorem for interval circulants, developed relay lookup and load-aware selection algorithms, quantified table size and preprocessing, calibrated exact $D_f(n)$ values for small groups, measured relay lookup against a graph-search baseline, gave worked examples, and evaluated more than half a million generator sets. The results support the view that shared-relay multiplicity is a useful local robustness metric for circulant networks and that generator choice strongly affects worst-case relay-fault tolerance.

\appendix
\section{Reproducibility Notes}

The optimized experiment uses checkpointed CSV output. Layer 1 writes exact metrics and generator sets after each completed batch. Layer 2 reads only the selected compact designs for recovery simulation. The reported greedy construction is a heuristic, but every reported $R(S)$ and multiplicity statistic is recomputed exactly from the final generator set. The artifact files include the generator sets, degree summaries, recovery summaries, selected histograms, run settings, and figure scripts. The optimized full run reported here was executed on a Windows workstation with 56 logical CPU workers available; the production script capped design workers at 24 and recovery workers at 8, using Python 3.x and NumPy for vectorized difference counting.

\section*{Acknowledgment}
The authors thank the Department of Computer Science, Faculty of Science, Kuwait University, for its support. The authors also thank the reviewers and colleagues whose comments helped clarify the distinction between the underlying difference-basis theory and the network-layer contribution of this work. This research received no specific grant from any funding agency in the public, commercial, or not-for-profit sectors.

\end{document}